\newcommand{\subparagraph}{}
\newcommand{\myparagraph}[1]{\noindent{\bf #1.}}
\newtheorem{theorem}{Theorem}[section]
\newtheorem{lemma}[theorem]{Lemma}
\newtheorem{definition}{Definition}
\newtheorem{corollary}{Corollary}
\newcommand{\name}{LaKSA\xspace}
\newcommand{\comb}[2]{\left( \hspace{-0.15cm} \begin{array}{c} #1 \\ #2 \end{array} \hspace{-0.15cm} \right)}
\newcommand{\advfrac}{\alpha}
\newcommand{\commitb}{B}
\newcommand{\cround}{m}
\renewcommand{\P}{\mathbb{P}}
\newcommand{\E}{\mathbb{E}}
\newcommand{\prole}{p_{\textsc{role}}}
\newcommand{\nrole}{N_{\textsc{role}}}
\newcommand{\removelatexerror}{\let\@latex@error\@gobble}
\begin{document}
\setlength{\abovedisplayskip}{3pt}
\setlength{\belowdisplayskip}{3pt}
\setlength{\abovedisplayshortskip}{3pt}
\setlength{\belowdisplayshortskip}{3pt}

\hyphenation{a-na-lo-gous-ly}

\title{\name: A Probabilistic Proof-of-Stake Protocol
\thanks{We thank the anonymous reviewers for their valuable comments and
suggestions. This work was supported in part by the Ministry of Education,
Singapore, under its MOE AcRF Tier 2 grant (MOE2018-T2-1-111), by the
National Research Foundation (NRF), Prime Minister's Office, Singapore, under
its National Cybersecurity R\&D Programme (Award No. NRF2016NCR-NCR002-028) and
administered by the National Cybersecurity R\&D Directorate, and by
A*STAR under its RIE2020 Advanced Manufacturing and Engineering (AME)
Programmtic Programme  (Award A19E3b0099).}
}

 \newcommand{\affa}{$^*$}
 \newcommand{\affb}{$^\dag$}
 \newcommand{\auth}{$^\ddag$}
 \author{
 \IEEEauthorblockN{Dani\"el Reijsbergen\affa\auth, Pawel Szalachowski\affa\auth$^\#$, Junming Ke\affb, Zengpeng Li\affa, and Jianying Zhou\affa}
 \IEEEauthorblockA{\\[-0.2cm]\affa Singapore University of Technology and Design, Singapore}
 \IEEEauthorblockA{\affb University of Tartu, Tartu, Estonia}
 \IEEEauthorblockA{\auth \textit{The authors contributed equally to this work.}\quad $^\#$\textit{Now at Google.}}
 }

\maketitle

\begin{abstract}
    We present Large-scale Known-committee Stake-based Agreement (\name), a chain-based Proof-of-Stake protocol
    that is dedicated, but not limited, to cryptocurrencies.  \name minimizes interactions
    between nodes through lightweight committee voting, resulting in a simpler, more robust, and more scalable proposal
    than competing systems.  It also mitigates other drawbacks of previous
    systems,
    such as high reward variance and long confirmation times.
    \name can support large numbers of nodes by design, and provides
    probabilistic safety guarantees in which a client makes commit decisions by calculating the probability that a transaction is reverted based on its blockchain view.
    We present a thorough analysis of \name and
    report on its implementation and evaluation. Furthermore, our new technique of proving safety can be applied more broadly to other Proof-of-Stake protocols.
\end{abstract}

\section{Introduction}
\label{sec:intro}
One of the main innovations of Bitcoin~\cite{nakamoto2008bitcoin} is Nakamoto Consensus (NC), a
protocol for maintaining a distributed data structure called a \emph{blockchain}. In NC, each participating node tries to become the round leader by
solving a Proof-of-Work (PoW) puzzle.  If a solution is
found, the node announces a \emph{block} that contains a hash link to the previous block,
the solution of the PoW puzzle, transactions, and metadata.  Potential network forks are
resolved by following the \textit{longest-chain} rule.  In this
non-interactive leader election process,
a block acts as a medium of transactions, as a leader election
proof that is easily verifiable by other nodes, and as confirmation of all previous
blocks. Hence, confidence in a given chain is built gradually through the addition of new blocks.  The NC protocol can
scale to a number of nodes that is vastly too large to be handled by traditional
Byzantine Fault Tolerant (BFT) consensus
protocols~\cite{castro1999practical}, which in  every round require significant communication
complexity among all nodes.  However, NC has critical limitations: an enormous energy footprint, low throughput, and a slow and
insecure block commitment process that relies on the absence of large-scale adversarial behavior \cite{eyal2014majority}. Furthermore, its reward frequency and structure encourage 
centralization, which in turn amplifies its security
vulnerabilities~\cite{bonneau2015sok,croman2016scaling,gervais2016security}. 

A promising direction in the mitigation of (some of) these drawbacks involves
Proof-of-Stake (PoS) protocols, in which nodes do not vote with their computing power, but with their \emph{stake}, i.e., blockchain tokens. In PoS, new nodes need to obtain stake from existing nodes -- although this is more restrictive than Bitcoin's fully permissionless setting, PoS systems
promise multiple benefits. The main benefit is reduced energy consumption,
as voting is typically based on a cryptographic lottery  that relies only on short messages instead of resource-intensive PoW puzzles.  
Interestingly, as participating nodes and their stake are known upfront, these
systems also promise to commit blocks faster and with better safety properties, i.e.,
similar properties as those in standard BFT protocols.

However, there are many obstacles in the design of new PoS schemes.  Several
recent systems try to apply the longest-chain
rule in the PoS setting.  Unfortunately, those proposals mimic Nakamoto's design
decisions that introduce undesired effects, e.g., a high reward variance
which encourages centralization~\cite{fanti2018compounding}.  Moreover, although they are conceptually similar to NC, whose properties have been investigated in the literature, it has
proven challenging to replace PoW-based voting by cryptographic primitives while avoiding fundamental limitations~\cite{brown2018formal}.
For instance, in NC it is expensive to work on PoW
puzzles on two branches simultaneously because the node will eventually lose its
investment on the losing branch. However, it is trivial in PoS systems to vote on multiple
conflicting branches -- this is informally known as the \emph{nothing-at-stake} problem.  Similarly, it is easier to launch \textit{long-range attacks} in
the PoS setting, in which an adversary obtains keys of nodes who controlled a
majority of stake in the distant past and creates a long alternative chain.

In this work, we present the \textit{\underline{La}rge-scale \underline{K}nown-committee \underline{S}take-based \underline{A}greement} (\name) protocol. \name is a PoS protocol that addresses
the above limitations while remaining conceptually simple and free from BFT-like
message complexity.  In our protocol, committee members are pseudorandomly and periodically sampled
 to vote for their preferred views of the main chain.  Evidence for these votes is
 included in the blocks, and users make their own decisions about whether to commit a block and act on the block's transactions -- e.g., dispatch a package after receiving a cryptocurrency payment. In \name, clients
calculate the \emph{probability} that the block can be reverted given how many votes support the block and its descendants, and commit the block if this probability is low enough.
We show that a design with fixed-size pseudorandom committees
brings multiple benefits, and although it also introduces a vulnerability to adaptive attackers, we discuss how to mitigate this threat using network-level anonymization techniques from the existing literature \cite{bojja2017dandelion,fanti2018dandelion++}.

By reducing the degree of interaction between committee members, \name can scale to large
numbers of active participants. This is combined with a high frequency
of blocks and rewards for
participating nodes, which mitigates the tendency towards centralization in NC.
The chain selection algorithm in \name aims to gather as much information as possible about the nodes' views, which leads to higher security and faster commit decisions.
The CAP theorem~\cite{brewer2000towards} restricts distributed systems to select
either availability or consistency in the face of network partitions. \name
favors availability over consistency; however, in \name partitioned clients
would notice that their view of the blockchain does not provide strong safety
guarantees for their commit decisions -- this is demonstrated through a novel use of hypothesis
testing.
\name is also augmented with a fair and coalition-safe rewarding scheme,
omitted by many competing
systems~\cite{badertscher2018ouroboros,david2018ouroboros,gilad2017algorand}.

We emphasize the impact of the major design choices in \name through a detailed comparison to  Algorand \cite{gilad2017algorand}, which is a closely-related PoS protocol. Whereas Algorand's BFT-like block commitment scheme focuses on individual rounds, \name aggregates information from multiple rounds and hence achieves higher security and flexibility in the sense that each user is able to set her own security thresholds.
We thoroughly analyze our system,  present  the efficiency
and deployability of \name through an implementation, and discuss alternative design choices and
extensions. 

\myparagraph{Contributions}
\name introduces a novel chain selection mechanism and probabilistic commit
rules analyzed by a novel application of statistical hypothesis testing. To our best
knowledge, \name is the first concrete protocol with fixed-size pseudorandom
committees that demonstrates its advantages through a comparison to related work.
Finally, the cryptographic sampling procedure is also a new construction.

\section{Preliminaries \& Requirements}
\label{sec:pre}
\myparagraph{Network \& Threat Model}
Our system consists of multiple peer nodes who seek to achieve
consensus over the state of a distributed ledger.  The nodes are
mutually untrusting with exactly the same rights and functions -- i.e., no node or
any subset is trusted.  Nodes are identified by their unique public keys, and
to avoid Sybil attacks~\cite{douceur2002sybil} we assume that nodes require stake in terms of the underlying cryptocurrency to participate. We assume that the nodes' stakes are
integers that sum to the total stake $n$. The
initial set of nodes and their stake allocation is pre-defined in a genesis
block that is known to all nodes (i.e., the system is bootstrapped from the
genesis block). 
The stake distribution is represented by a
mapping between public keys and their corresponding stake units.  The system is
used by clients who wish to make cryptocurrency transactions and who do not necessarily participate in
the consensus protocol (although some may).

We assume that participating nodes have loosely synchronized clocks, as the
protocol makes progress in timed rounds.  Nodes are connected via a peer-to-peer
network in which messages are delivered within $\Delta$ seconds during periods of  network
synchrony.  We assume a partially synchronous communication
model~\cite{dwork1988consensus}, where the network can be asynchronous with
random network delays, but synchrony is eventually resumed after an unknown time period
called global stabilization time (GST). We restrict this further by assuming that during
asynchronous periods, an adversary cannot adaptively move nodes between splits without being detected, although the adversary can be present in all existing splits. 

The adversary in our model is able to control nodes possessing $f=\alpha n$ ``malicious''
stake \emph{units}, for any \mbox{$\alpha \in [0,\frac{1}{3})$} and where $n$ denotes the total number of stake units; hence, honest
nodes possess
$n-f=(1-\alpha)n$ stake units. Here, both $f$ and $n$ are integers. Adversarial nodes can be byzantine, i.e., they can fail
or equivocate messages at the adversary's will.  We assume that $\alpha<\frac{1}{3}$ is a
bound in both our adversary and network models~\cite{dwork1988consensus}.  We assume
that the adversary's goal is to a) stop the progress of the protocol, or b) cause a
non-compromised client to commit a transaction that is reverted with a
probability that is higher than a threshold $p^*$ specified by the client.  We also discuss
adversarial strategies that aim to undermine other protocol properties, such as
efficiency, throughput, and fairness.

In our model, honest nodes faithfully execute the protocol, but can be \textit{offline}, meaning that their blocks and messages cannot be received by the rest of network. A node can be offline due to a network fault, or because its client is inactive due to a software or hardware fault. Such faults are usually {temporary} but can be permanent -- e.g., if the node has lost the private key associated with its stake. For \name to satisfy the property of liveness as discussed below, we require honest nodes to be online after the GST. Any node that is offline after the GST is grouped with the adversarial nodes, in line with BFT protocols. The temporary unavailability of honest nodes slows down block commitment, because support for the blocks is accumulated more slowly if fewer votes are received by the block proposers. However, as we discuss in \autoref{sec:analysis}, offline nodes do not make a safety fault -- i.e., a committed block being reverted -- more likely. 
As such, we can safely assume that all nodes are online for the safety proof.

\myparagraph{Desired Properties}
First, we desire the two following fundamental properties.

\begin{compactitem}[]
    \item \textit{Liveness}: a valid transaction sent to the network is
        eventually added to the global ledger and committed by all clients.
    \item \textit{Probabilistic Safety}: if for a client-specified probability
        \mbox{$p^*\in[0,1]$}, the client commits a transaction, then the probability that this transaction is ever
        reverted is at most $p^*$.
\end{compactitem}
One important aspect of \name is that, unlike in traditional BFT protocols, our
definition of safety is probabilistic. The relaxed safety property allows us to
scale the system to thousands of active participants and to propose a simple,
robust, and high-throughput system, while still achieving strong client- or even
transaction-specific safety.

As we present our work in an open cryptocurrency setting, we also aim to achieve the following additional properties.
\begin{compactitem}[]
    \item \textit{Scalability}: the system should scale to large numbers of
        nodes.
    \item \textit{High Throughput}: the system should provide high throughput
        for transactions, and in particular, the consensus mechanism should not
        be a bottleneck for this throughput.
    \item \textit{Efficiency}: overheads introduced by the system should be
        reasonable, allowing system deployment on the Internet as it is today,
        without requiring powerful equipment or resources -- e.g., CPU, memory, and
        bandwidth.
    \item \textit{Fairness}: an honest node with a fraction $\beta$ of the total
        stake, should have a presence of approximately $\beta$ in the
        main chain of the blockchain. This is especially important when the
        presence in the blockchain is translated into system rewards.
    \item \textit{Coalition Safety}: any coalition of nodes with the total stake
        $\alpha=\sum \alpha_i$, where $\alpha_i$ is a coalition node's stake,
        cannot obtain more than a multiplicative factor $(1 + \epsilon)\alpha$
        of the total system rewards for some small $\epsilon$.
\end{compactitem}

\myparagraph{Cryptographic Notation}
We make standard cryptographic assumptions and we use the following
cryptographic constructions.
\begin{compactitem}[]
    \item $H(m)$ is a collision-resistant cryptographic hash function, producing
        a hash value for a message $m$;
    \item $PRF_k(m)$ is a keyed pseudorandom function (PRF), outputting a
        pseudorandom string for key $k$ and message $m$;
    \item $Sign_{sk}(m)$ is a signature scheme that for a secret key $sk$ and a
        message $m$ produces the corresponding signature $\sigma$;
    \item $\textit{VrfySign}_{pk}(m, \sigma)$ is a signature verification
        procedure that returns \textit{True} if $\sigma$ is the correct signature of
        $m$ under the secret key corresponding to the public key $pk$, and \textit{False}
        otherwise.
\end{compactitem}

\section{Protocol}
\label{sec:details}
\subsection{Blockchain Structure}

\name operates through a blockchain, such that each block contains a set of transactions,
a link to the previous block, and various metadata. Recall that for each node we have a secret/public key pair $sk, pk$ as well as the amount of associated stake.  
The structure of a block is
\begin{equation}
    \label{eq:block}
    B = (i, r_i, H(B_{-1}), F, V, \mathit{Txs}, pk, \sigma),
\end{equation}
where
\begin{compactitem}[]
\item $i$ is the round number (consecutive, starting from 0);
    \item $r_i$ is a random value generated by the leader;
    \item $H(B_{-1})$ is the hash of the previous valid block through which blocks
        encode the \textit{parent-child} relationship;
    \item $V$ is a set of \textit{votes}
        that support the previous block (see \autoref{eq:vote});
    \item $F$ is a set of known, to the leader, forked blocks
        that have not been reported in any previous known block;
    \item $\mathit{Txs}$ is a set of transactions included in the block;
    \item $pk$ is the leader's public key;
    \item $\sigma$ is a signature, created by the leader over all
        previous fields except $pk$.
\end{compactitem}

Every block $B$ supports its predecessor $B_{-1}$ by including votes of nodes who were elected in $B$'s round and who vouched for $B_{-1}$ as the last block on their preferred chain.  A vote has the following structure:
    \begin{equation}\label{eq:vote}
        v=(i, H(B_{-1}), s, pk, \sigma),
    \end{equation}
where
\begin{compactitem}[]
    \item $i$ is the round number;
    \item $H(B_{-1})$ is the hash of the previous valid block;
    \item $s$ is the stake that the vote creator was elected with as a voter for
        the round $i$;
    \item $pk$ is the voter's public key;
    \item $\sigma$ is a signature, created by the voter over all
        previous fields except $pk$.
\end{compactitem}
Essentially, a vote encodes a stake unit with which the voter supports her blockchain view in a given round.

As our blockchain contains blocks that follow a parent-child relation and as it
may contain forks, the final rendered data structure is essentially a tree (see an example in \autoref{fig:chain}).  However,
within this tree only one branch is considered as the \textit{main chain} whose
transactions are strictly ordered.  Transactions are initiated by blockchain
nodes who wish to transfer their crypto tokens or execute a higher-level logic
(e.g., smart contracts).  Transactions typically also include fees
paid to round leaders for appending them to the blockchain.  
We do not discuss any concrete
transaction model or a validation logic for transactions, although
models used in other systems can easily be implemented with \name.

\subsection{Voting Round}
\label{sec:details:voting}

The protocol bootstraps from the genesis block and makes progress in two-step
rounds. The two steps each last $\Delta$ seconds, where $\Delta$ as defined in \autoref{sec:pre} -- for brevity, we treat $\Delta$ as a bound on all delays including message generation and
processing times. The voting procedure in each round is presented in
\autoref{alg:voting}.  At the beginning of round $i$, each node
obtains the round's pseudorandom beacon $r$ and determines the voters and leaders.
In \autoref{sec:details:beacon} we show concrete instantiations of beacon
generation and discuss alternative ways of realizing it.

        In the first step of any round $i$, each node checks whether it can vote in
        round $i$ by calling $VoterStake()$, which returns the number of stake units that it
        can use to vote in round $i$. If a positive number is returned, then the
        node is called a \textit{voter} in round $i$ and it can
        vote for the last block of what it believes to be the main chain to support this chain.  To do so, it creates a vote $v$ (see \autoref{eq:vote})
        and broadcasts the vote immediately to the network.  Other nodes validate each received vote by checking
        whether
        \begin{inparaenum}[a)]
            \item it is authentic, formatted correctly, and not from the future,
                 i.e., not with a round number that exceeds $i$,
            \item it points to a valid previous block, and
            \item the voter is legitimate, i.e., $VoterStake()$ returns the
                positive stake amount as declared.
        \end{inparaenum}
        After successful verification, votes are added to the pending
        list of votes that directly support its predecessor block.  These votes create a
        so-called \textit{virtual block} that consists of collected but not yet included votes, and one virtual block can support
        another virtual block; we discuss this further in \autoref{sec:details:forks}.

        After waiting for $\Delta$ seconds to collect and
        validate votes, nodes execute the round's second step.  First, the node checks the output of the $LeaderStake()$
        function, and if it is positive then the node is a
        \textit{leader} in that round.  If so, then the node determines the main chain -- see the details in
        \autoref{sec:details:forks}. The node then creates and propagates a new
        block that has the main chain's last block -- which can be a virtual block -- as its predecessor and which includes, among other fields (see \autoref{eq:block}), all
		collected votes and the generated random value $r_i$. A malicious leader can censor -- i.e., refuse to include -- votes, but we use our incentive mechanism, which is described in \autoref{sec:details:rewards}, to discourage this attack.
	 
        A node that receives a new block verifies whether
        \begin{inparaenum}[a)]
            \item it is authentic, formatted correctly, and not from the future,
            \item it points to a valid previous block,
            \item the votes are correct, and
            \item the leader is legitimate, i.e., $LeaderStake()$ returns a
                positive value.
        \end{inparaenum}
        If the block is validated, it is appended to its corresponding chain.
        Besides pointing to the previous block, a leader in its block lists all
        known forks that were not reported in previous blocks, including pending
        votes of other blocks.

We propose a concrete instantiation of voter/leader election in \autoref{sec:details:election}.
\name can also be implemented with other procedures (e.g., based on VRFs as discussed in \autoref{app:vrf}) as long as nodes act as leaders
and voters in proportion to their stake possession.   
We do not restrict the roles of a node per round, i.e., a node can both be a
voter and leader in a given round, or act in only one of these roles, or
none.

\subsection{Leader and Voter Election}
\label{sec:details:election}
We propose a method of electing leaders and voters which is based on
a novel \textit{cryptographic sampling} method presented in \autoref{alg:election}.  This method creates an array of all stake units and pseudorandomly
samples a fraction from it.  The method uses
 uniquely generated PRF outputs to sample stake units. 
 In a round, leader and voter elections should be independent, thus
the $Sample()$ function takes a role parameter -- `lead' and `vote', respectively -- to
randomize PRF outputs for these two elections.  The function returns a list of
sampled public keys -- each key corresponds to a stake unit -- and is parametrized
by the size of the output list. In the following, $q$ denotes how many
stake units out of the total stake $n$ are elected every round for the voting committee, and $l$ is
an analogous parameter for the number of leaders. The $VoterStake()$ and $LeaderStake()$ functions
run $Sample()$ and return how many times the given public key is present in the
sampled stake.  In App.~\ref{app:prf-sampling}, we show that our  construction is
indistinguishable from random sampling for computationally bounded
adversaries.  As stake units are sampled uniformly at random, a node
with stake $s$ can be elected as a voter between 0 and $s$ times in any given
round.  For performance reasons it may be desirable to elect one leader per round, which is achieved by setting $l=1$.

\begin{figure}
\removelatexerror
\begin{algorithm}[H]
\caption{The voting procedure.}
\label{alg:voting}
\footnotesize
\SetKwProg{func}{function}{}{}
\func{VotingRound(i)}{ 
    $r \leftarrow RoundBeacon(i)$;
    $s\leftarrow VoterStake(pk,r)$\; 
    \If(// check if I am a voter){$s>0$}{
        $B_{-1} \leftarrow MainChain().lastBlk$; // get last block\\
        $\sigma\leftarrow Sign_{sk}(i\|H(B_{-1})\|s)$\;
        $v\leftarrow (i,H(B_{-1}), s, pk, \sigma)$; // support vote\\
        Broadcast($v$)\;
    }
    Wait($\Delta$); // meanwhile, collect and verify support votes\\
    \If(// check if I am a leader){$LeaderStake(pk,r)>0$}{
        $B_{-1} \leftarrow MainChain().lastBlk$; // possibly different block\\
        $V\leftarrow\{v_a, v_b, v_c, ...\}$; // received $B_{-1}$'s support
     votes\\
        $r_i \leftarrow Random()$;
        $\sigma\leftarrow Sign_{sk}(i \| r_i\| H(B_{-1}) \|F\| V \| \mathit{Txs})$\;
        $B \leftarrow (i, r_i, H(B_{-1}),F, V, \mathit{Txs}, pk, \sigma)$; // new block\\
         Broadcast($B$)\;
     }
     Wait($\Delta$); // wait for the next round
}
\end{algorithm}
\begin{algorithm}[H]
    \caption{Leader/voter election via cryptographic sampling.}
\label{alg:election}
\footnotesize
\SetKwProg{func}{function}{}{}
\func{VoterStake(pk, r)}{
    $tmp\gets Sample(q, r, \textit{`vote'})$;
    \Return $tmp.Count(pk)$\;
}
\func{LeaderStake(pk, r)}{
    $tmp\gets Sample(l, r, \textit{`lead'})$;
    \Return $tmp.Count(pk)$\;
}
\func{Sample(size, r, role)}{
    $tmp \gets []$;
    $res\gets []$\;
    \For{$pk \in stake$}{
        \For{$s \in \{1, ..., stake[pk]\}$}{
            $tmp.Append(pk)$\;
        }
    }
    \For{$i \in \{1,...,size\}$}{
        $k\gets PRF_{r}(i\|role)\ \%\ Len(tmp)$\;
        $res.Append(tmp[k])$;
        $tmp.Delete(k)$\;
    }
    \Return $res$\;
}
\end{algorithm}
\end{figure}

\myparagraph{Limitations}
The described approach guarantees that in every round the exact same stake fraction is sampled.
As a result, nodes are able to more quickly make commit decisions.
However, a disadvantage is that an adversary may try to launch an adaptive
attack -- e.g., (D)DoS -- as elected nodes are known before they broadcast their
messages.  Fortunately, multiple lightweight mechanisms that provide
network anonymity are available. For instance,
Dandelion~\cite{bojja2017dandelion,fanti2018dandelion++}, which is intended for use in
cryptocurrencies, provides formal anonymity guarantees combined with low latency and
overheads.  Using such a mechanism together with \name would complicate attempts by the
adversary to identify the node's IP address, effectively mitigating the
mentioned (D)DoS attack.

Another
way of addressing this issue
in PoS blockchains is to elect nodes using cryptographic primitives with secret inputs (e.g., VRFs as in Algorand \cite{gilad2017algorand}, or unique signatures).  Using this approach, a node's role can
be revealed only by this node itself, e.g., by propagating a message.  The
disadvantage, as we show in \autoref{sec:algorand}, is the resulting variance in the number of elected entities, which slows the block commitment process.  In App.~\ref{app:vrf}, we
show how \name can be combined with VRF-based election and that our
commitment scheme is  still applicable.  An efficient mechanism that combines
``secret'' election with fixed committee sizes is an
open research problem.

\subsection{Chain Selection}
\label{sec:details:forks}
\name does not follow the longest-chain rule of Bitcoin's NC -- instead, the strength of a chain
is expressed by the stake that supports its blocks.  To improve the handling of
forks, we incorporate the GHOST protocol~\cite{sompolinsky2015secure}, which
improves the throughput and security of NC by utilizing fork blocks in the calculation of the total PoW
weight of chains. 

\myparagraph{Forks and Virtual Blocks}
In \name, votes contribute to ``weights'' of chains and are crucial to determine the main branch.  In short, they represent beliefs of
stakeholders about their views of the main chain.  To compare chains, nodes
follow the
\textit{most-stake} rule, i.e., the chain which is supported by more
stake-weighted votes is
chosen. Ties can be broken in \name by
 selecting the chain whose hash value computed over the concatenated round
beacon and whose last leader's public key is smaller. \name allows situations in which no block is added in a round -- e.g., when a faulty node is
elected as a leader or when the network is temporarily asynchronous -- or in which the block
 contains few or no votes.  In such cases, nodes create a virtual block
(see \autoref{sec:details:voting}) which is a set of received votes that have not yet
been included in the main chain. 

Virtual blocks do not have transactions or
signatures, unlike `standard' blocks.
During chain selection, nodes do not distinguish virtual from ``standard''
blocks: if a virtual block is stronger than a conflicting standard block, then nodes will support the virtual block. Voters can support a virtual block by voting for the block's latest standard ancestor (see \autoref{fig:virtual}). In later rounds, leaders can collate non-included votes per round to create a sequence of virtual blocks, of which the latest is used as the predecessor of the newly proposed standard block. The virtual blocks are then transmitted by the leader along with the standard block. A leader who neglects to include votes in her block
risks that it is overwritten by another leader who aggregates
the non-included votes in virtual blocks. The overwritten block may still be referenced by a later block using GHOST as discussed below, but even then the malicious leader still loses her block reward (see \autoref{sec:details:rewards}). If a vote legitimately appears in two conflicting blocks, e.g., in a virtual block and an overwritten but referenced standard block from the same round, then one is disregarded. Since virtual blocks are only created through standard blocks, the commitment procedure of \autoref{sec:details:final} is only executed on standard blocks. 

\begin{figure}[t!]
  \centering
  \includegraphics[width=.9\linewidth]{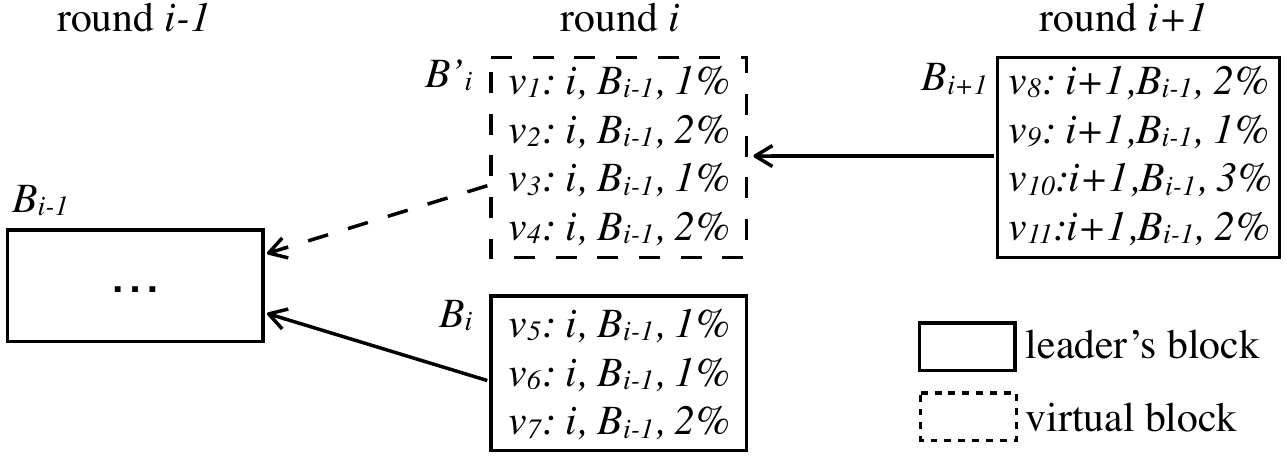}
    \caption{An example of a virtual block fork.}
  \label{fig:virtual}
\end{figure}

An example is presented in \autoref{fig:virtual}, where $q/n=10\%$ and voters in
round $i$ publish seven votes supporting the block $B_{i-1}$.
The leader of round $i$ creates a block $B_i$ which includes only three
votes with 4\% of the stake. Therefore, a virtual block $B'_i$ with 6\% stake is
created and in round \emph{$i+1$} all votes implicitly support this block
instead of $B_i$, thus the leader of this round creates a block $B_{i+1}$ that aggregates
votes of the virtual block and points to $B_{i-1}$ via $B'_{\;i}$.  We emphasize that the
block $B_{i+1}$ can also include a pointer to $B_i$ cf.\ GHOST.

\myparagraph{Subtree Selection (GHOST)}
The proposed protocol
is likely to work well when the network is highly synchronous -- which can be achieved by choosing a sufficiently high value for $\Delta$.  However, $\Delta$ must be traded off against transaction throughput: i.e., throughput can only be high if $\Delta$ is low. If $\Delta$ is small compared to the network latency, then
 asynchronous periods in which blocks cannot reach nodes
before the defined timeouts occur often. This would result in a high stale block
ratio that harms the security of system.

In order to prepare the protocol to withstand such situations, we modify and
extend GHOST to adapt it to our setting.  The chain selection procedure is
depicted in \autoref{alg:selection}.  As presented, the \textit{MainChain}
procedure starts at the genesis block.  Then for each of its child blocks, the
algorithm calculates the total stake in the child block's subtree,
and repeats this procedure for the child block with the most stake aggregated
on its subtree, and so on.  When the protocol terminates it outputs the block which
denotes the last block of the main branch.
The chain selection procedure relies only on  the stake encoded in votes and
collected votes of virtual blocks -- i.e., those not included in any actual block --
and includes them in the total stake of their chain.  

\begin{algorithm}[t!]
\caption{The chain selection procedure.}
\label{alg:selection}
\footnotesize
\SetKwProg{func}{function}{}{}

    \func{MainChain()}{
        $best \leftarrow (.lastBlk\leftarrow B_0, .stake\leftarrow Stake(B_0))$\;
    \While{$True$}{
        \If{$Children(best.lastBlk)=\emptyset$}{
            \Return $best$\;
        }
        \For{$B \in Children(best.lastBlk)$}{
            $s\leftarrow TreeStake(B)$\;
            \If{$s>best.stake$}{
                $best \leftarrow (B, s)$; // stronger subtree found\\
            }
        }
    }
}
\end{algorithm}

\begin{figure}[t!]
  \centering
  \includegraphics[width=\linewidth]{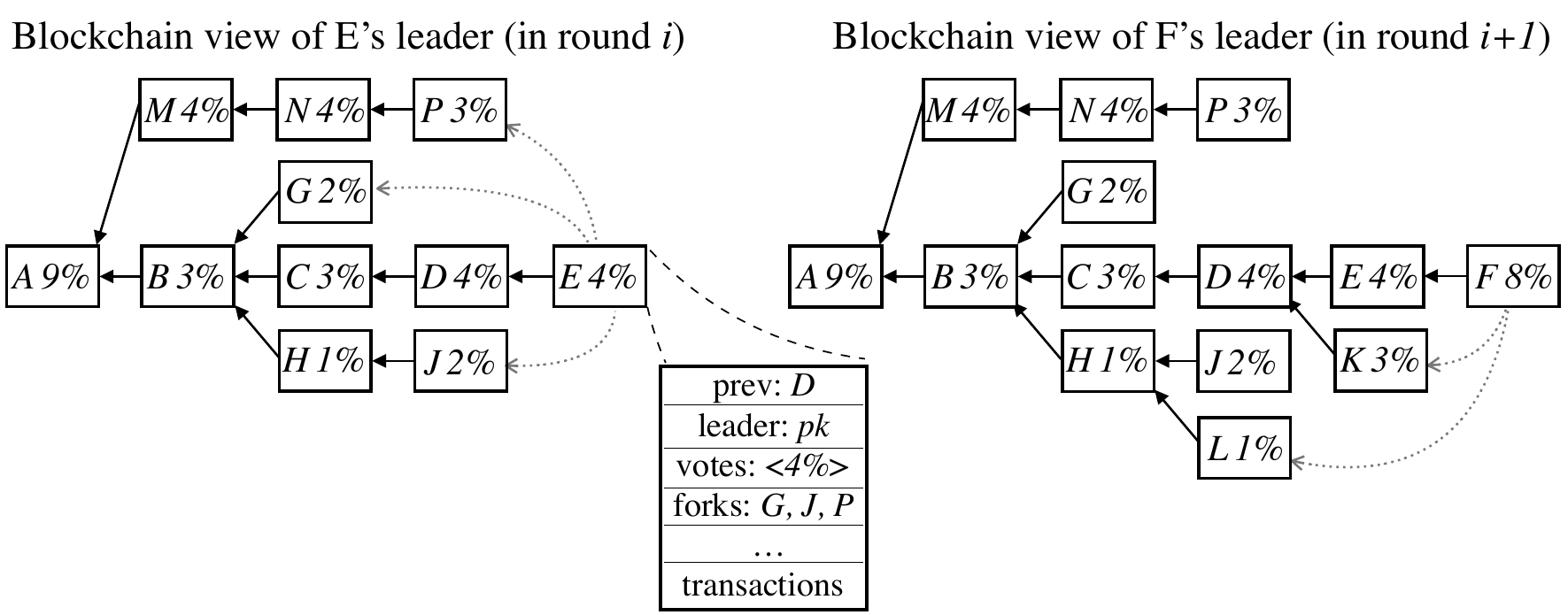}
    \caption{Example of the evolution of the blockchain when many forks occur. In this figure (and in Fig.\ \ref{fig:weak_preceding}~and~\ref{fig:prob}), the percentage in each block denotes the supporting stake included in that block.}
  \label{fig:chain}
\end{figure}

 To illustrate the chain selection procedure we show an example in \autoref{fig:chain} where $q/n=10\%$. In our example, in round $i$, the leader (i.e., the
creator of block $E$) sees four chains, i.e., those ending with the blocks $P$,
$G$, $D$, and $J$, respectively.  To determine the main chain, the leader runs
the protocol starting with the genesis block $A$ and computes the stake of its
children's subtrees. Block $M$'s subtree stake is $11\%$ whereas $B$'s subtree stake is $15\%$. Thus, $B$ is selected and the leader
repeats the same procedure for this block. Finally, the leader determines
$A,B,C,D$ as the main chain, and creates a new block $E$ pointing to it. 
Moreover, the leader introduces pointers (dashed lines) to known but yet unreported fork blocks ($P,G,J$).
These pointers are required to preserve the integrity of the blockchain view.
Note that the chain
$A,B,C,D$ is selected as the main one, despite the fact that the stake that
voted for this chain is lower than the stake of the chain $A,M,N,P$. As
different chains within a subtree support the same chain prefix, an
advantage of combining GHOST with the most-stake rule is that it requires an
adversary to compete with the entire subtree and not only its main chain,
which makes attacks on safety much more difficult.  Repeating the procedure, round
$i+1$'s leader extends $E$'s chain and reports on fork blocks $K$
and $L$.

\subsection{Block Commitment}
\label{sec:details:final}
Block commitment in \name is decided by each node individually.  Each node
specifies $p^*$, its risk level, and $\commitb$, the block to be committed. Given its
current view of the blockchain, it then calculates the probability that the target block
$\commitb$ can be reverted.  Given our chain selection procedure, this question
can be reformulated as follows: \textit{what is the probability that an
adversary can create any stronger subtree than the corresponding subtree
containing $\commitb$?} If the probability is less than the threshold $p^*$,
then the block is committed.

An
adversarial subtree has to be rooted at a block outside $\commitb$'s
{supporting subtree}, as
otherwise, it would support $\commitb$.
The stake of the supporting subtree is known to the node and computed simply by
$s=TreeStake(\commitb)$.
An adversarial subtree can originate from any previous
round, so we require that the block $\commitb$ cannot be committed before all of its previous blocks have also been committed. For each block, we hence consider the potential strength of an adversarial subtree that originates in the parent of $\commitb$ and which has $k$ supporting blocks until the current round. 
As such, the supporting and adversarial branches are in competition during $k$ rounds. 
The node cannot determine
how many stake units support an adversarial subtree, but this knowledge is critical for the
security of the commit operation. Therefore, the node splits this stake into
the sum of
\begin{inparaenum}[a)]
    \item \textit{missing stake} which consists of those stake units that may
        unknowingly contribute to the adversarial branch during a fork, and
    \item \textit{adversarial stake} which is the sum of the stake that the
        adversary accumulated over the $k$ rounds.
\end{inparaenum}

The adversarial stake units can \textit{equivocate}, which means that they are involved in the creation of two or more different blocks or votes within the same round. 
As such, the stake units that contribute to the adversarial branch can also be present on $\commitb$'s supporting subtree. Furthermore, their exact quantity is unknown, although on average this will be equal to $\alpha q$. 
We discuss equivocations in more detail below. We assume the worst-case scenario regarding the
missing stake, i.e., that all missing stake -- even if it is honest -- is placed in an
unknown adversarial subtree.  That could happen during asynchronous periods or
network splits. The node also conservatively assumes that the adversary would win
every tie.  However, we emphasize that an adversarial subtree must
be internally correct, e.g., it cannot have equivocating votes within it, as otherwise, honest nodes would not accept it.

\begin{algorithm}[t!]
\caption{The block commitment procedure.}
\label{alg:commit}
\label{alg:probcomphyper}
\label{alg:probboundcramer}
\footnotesize
\SetKwProg{func}{function}{}{}
\func{Commit(lastCommit,$p^*$)}{
	$B \leftarrow \textit{GetMainChainBlockAtRound}(lastCommit)$;
	
    \While{$B\neq nil$}{
        $s\leftarrow TreeStake(B)$;
        $k\gets CurrRound()-lastCommit+1$\;
        \If(// $B$ is rejected){$\mathit{CalculateProb(k, s)} \geq p^*$}{
            \Return;
        }
        $lastCommit \leftarrow lastCommit + 1$\;
        $B \leftarrow \textit{GetMainChainBlockAtRound}(lastCommit)$\;
    }
}
\func{CalculateProb($k$,$t$)}{
	\If{$q^{k}$ $<$ MAX\_STAKE}{ \label{ln:computationthres}
		\Return $HyperGeomProb$($n$,$\frac{1}{2}(1+\alpha)n$,$q$,$k$,$t$);
	}
		\Return $CCBound$($n$,$\frac{1}{2}(1+\alpha)n$,$q$,$k$,$t$);
}
\func(// hypergeom. sum prob.){HyperGeomProb($n$,$u$,$q$,$k$,$t$)}{ 
    \If{$k = 1$\label{ln:cdfstart}}{ 
		$x \leftarrow 0$;
		
        \For{$i \in \{t+1,\ldots,q\}$}{
          $x \leftarrow x + \textit{HypergeometricPMF}(n,u,q,i)$;
        }
        \Return $x$; \label{ln:cdfend}
    }
    $x \leftarrow 0$;
    
    \For{$i \in \{0,\ldots,q\}$}{
    	$x \leftarrow x + \textit{HypergeometricPMF}(n,u,q,i)$ \\ \hskip0.6cm $*$ $\textit{HyperGeomProb}(n, u, q, k - 1, s - i)$;
       }
    \Return $x$;
}
\func(// the Cram\'er-Chernoff bound){CCBound($n$,$u$,$q$,$k$,$t$)}{ 
    \Return $e^{-k * \textit{MaximizeRateFunc}(n,u,q,t/k)}$;
}
\func{MaximizeRateFunc($n$,$u$,$q$,$t$)}{
	$\lambda_{\max} \leftarrow \textit{RateFuncSearchRange}(n,u,q,t)$;
	
	\Repeat{$\Delta < \epsilon$} {
		
		\For{$i \in \{1,\ldots,4\}$}{
			$y_i \gets \textit{RateFuncHelper}(n,u,q,\lambda_i,t)$;
		}
		
		$\vec{\lambda} \leftarrow \textit{ReduceInterval}(\vec{\lambda},\vec{y},n,u,q,t)$;
		
		$\Delta \leftarrow \lambda_{i_{\max}} - \lambda_0$;
		
		// $\epsilon$ is an accuracy threshold set by the user
	}
	\Return $(y_{i_{\max}} - y_0)/2$;
}
\func{RateFuncHelper($n$,$u$,$q$,$\lambda$,$t$)}{
	\Return $t\lambda - \textit{LogHyp2F1}(-q,-u,-n,1 - e^{\lambda})$;
}
\end{algorithm}

\myparagraph{Hypothesis Testing}
To commit $B$, the node conducts a
statistical \emph{hypothesis test} to assert whether most of the network has the
same view of the supporting tree as the node. To do so, the node computes a so-called \textit{$p$-value} that represents the probability of achieving a total supporting stake $s$ over $k$ rounds \textit{under the hypothesis} that fewer stake units are contributing to the supporting branch than to the adversarial branch. We commit $B$ if this $p$-value is so low that we can safely conclude that this hypothesis is invalid. The function \textit{Commit} in \autoref{alg:commit} is called once every round, and as many blocks as possible are committed during every call. 
To achieve safety, we use the quorum intersection argument present
in traditional BFT systems, except in our case it is probabilistic. Namely, an
adversary in such a split network could produce equivocating votes on both views at the
same time; however, the node can conclude that such a situation is
statistically unlikely if she sees that her target block is supported by more
than $\frac{2}{3}kq$ stake (see details in \autoref{sec:analysis:safety}). That
would imply that most of the honest nodes see the target block on the main chain,
as any alternative adversarial view cannot obtain more than $\frac{2}{3}kq$
over time (since $\alpha<\frac{1}{3}$).

The process is described by the pseudocode of \autoref{alg:commit}.  At the core
of the procedure, the node keeps committing the main chain's subsequent blocks by computing
the probability that their corresponding supporting trees are on the ``safe'' side of
the hypothetical fork.  Depending on the parameters, the $p$-value is computed
using one of two functions, namely $HyperGeomProb$ or $CramerBound$ (see
\autoref{alg:probcomphyper}). In \autoref{sec:analysis:safety}, we discuss these
algorithms in more detail and show that they indeed give an upper bound on the error
probability.

To keep our presentation simple, we do not parametrize the commit procedure by
the $f$ (or similarly $\alpha$) parameter. However, it would not change our methodology,
and in practice, it may be an interesting feature as nodes could then base commitment
decisions on their own adversarial assumptions in addition to their security level $p^*$.

\myparagraph{Block Commitment Example}
\begin{figure}[t!]
  \centering
  \includegraphics[width=.695\linewidth]{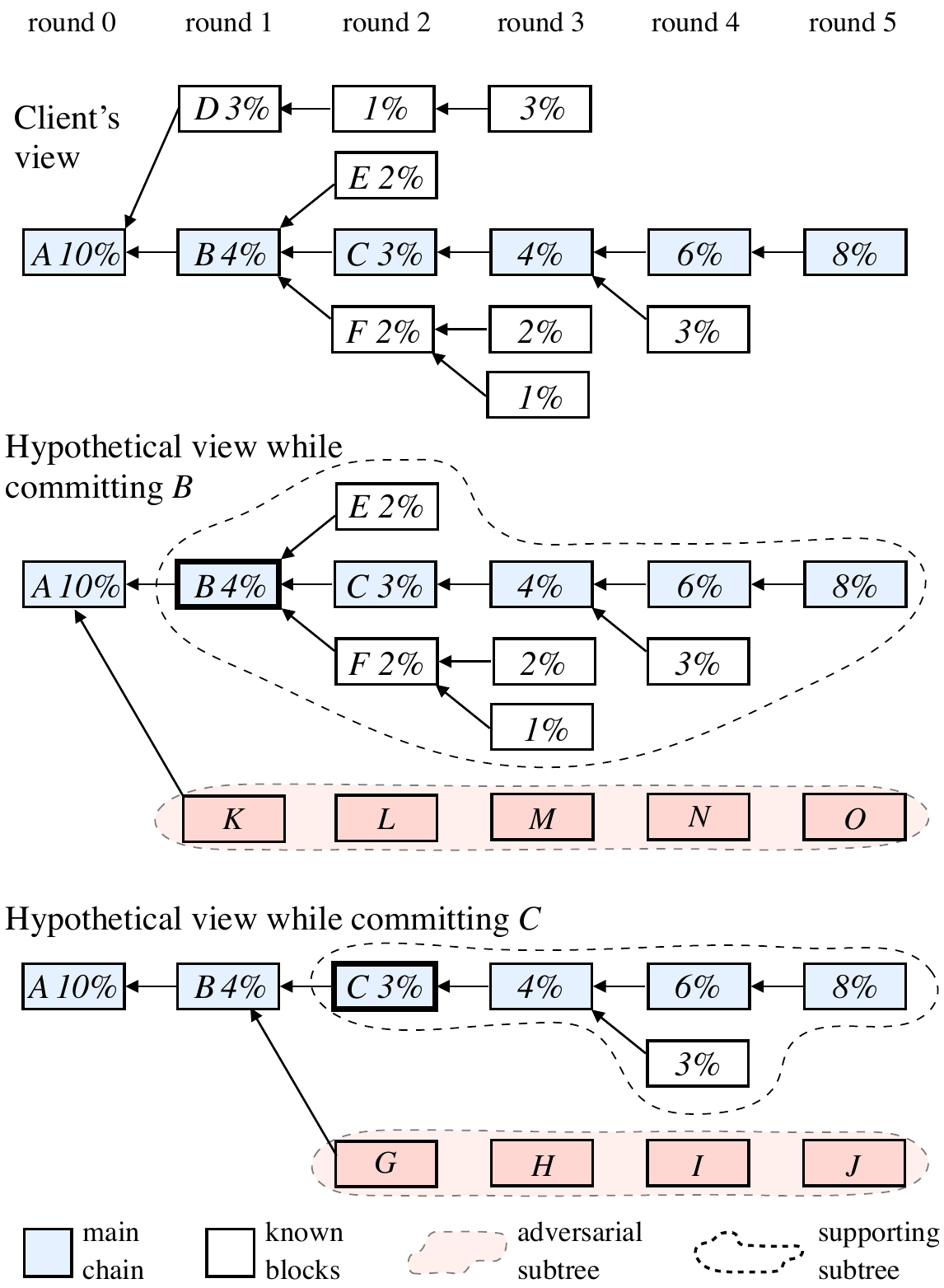}
    \caption{An example of the client's view of the blockchain and possible
    adversarial subtrees. (Percentage of a block denotes the stake directly
    supporting this block.)}
  \label{fig:prob}
\end{figure}
To illustrate the block commitment process better we present a blockchain example in
\autoref{fig:prob} where it holds that $q = \frac{1}{10} n$ and a client wishes to commit the block $C$ which is on the
main chain (denoted in blue).  To overwrite the block, an adversary would need
to overwrite $C$'s or $B$'s supporting subtree. However, since block $B$ must have already been committed, we do not need to consider any subtree starting from $A$ (although we show it in the figure). As described, the
client computes the $p$-value for the hypothesis that more stake units contribute to the adversarial subtree than to the supporting subtree under the worst-case conditions, and commits if this is low enough.

In our case, if the client wishes to commit $C$, then the adversarial
subtree that could overwrite the supporting tree directly originates
from block $B$ and has $k=4$ potential blocks, i.e., $G,H,I,J$. (We do not consider the
topologies of adversarial subtrees, as only the stake that they aggregate
decides their strengths.)  The supporting subtree has
$24\%$ of the stake. The missing stake
in this case is $4\cdot 10\% - 24\% = 16\%$. However, the exact amount of missing stake that contributes to an adversarial chain may depend on the random beacon implementation (e.g., block $J$ and the main chain block in the same round may have different committees if the beacon depends on some information in the blocks from the preceding rounds). Hence, we do not use this information directly. 
To be able to conclude with some degree of certainty that enough stake units are supporting $B$, we need the supporting stake to be comfortably above
$\frac{2}{3}kq$.  
If the calculated $p$-value is acceptable for the client, the client
commits $B$.

\myparagraph{Equivocation}
Although adversarial nodes can freely violate the rules of the protocol, in \name,
messages -- i.e., blocks and votes -- are signed, so adversarial nodes can be held accountable for
certain equivocations. In particular, the following actions are
detectable:
\begin{inparaenum}[a)]
    \item equivocating by producing conflicting votes or blocks within the same round, or
    \item supporting or extending a chain that is weaker than a previously
        supported or extended chain.
\end{inparaenum}
The former is a nothing-at-stake attack and it is provable by
showing adversary's two signed messages which support or extend different chains
in the same round.
In the latter case,
the adversary violates the chain selection rule. This can be proven 
by any pair of signed messages $(m_1, m_2)$ which support/extend two different
chains $C_1$ and $C_2$, respectively, where: 
$Round(m_1) < Round(m_2) \wedge Stake(C_1)>Stake(C_2)$.

Although prevention of such misbehavior is challenging, solutions that
disincentivize it by causing the misbehaving validator to lose all or parts of
her deposit/stake have been proposed~\cite{buterin2017casper}. Under this approach, the protocol allows honest nodes to submit evidence of equivocation to the blockchain for a reward, e.g., the finder's fee implemented in the smart contract version of Casper FFG \cite{buterin2019incentives}. Implementing such a scheme in \name is an interesting direction for future work.

\subsection{Random Beacon}
\label{sec:details:beacon}
\name relies on a pseudorandom beacon to elect round leaders and voters.  It is
important for security that these beacons are difficult to be biased by adversaries. 
The safety analysis in \autoref{sec:analysis:safety} relies on the assumption that voters and block proposers on each branch of a fork are sampled proportionally to their stake -- if the adversary is able to manipulate the random beacon's committee selection, then this assumption no longer holds. Furthermore, a beacon that is too predictable is vulnerable to adaptive network-level attacks (e.g., DoS) against voters and round leaders, as the window for such attacks is directly proportional to the time between the creation of a block and the publication of the random beacon.

In this paper we do not propose a new random beacon
construction -- instead, we rely on previously proposed
concepts for a concrete instantiation.  For the current implementation, we have followed the approach by Daian et
al.~\cite{daian2017snow}, where beacons are generated purely from the random
values aggregated over ``stable'' main chain blocks.  More precisely, as
presented in \autoref{sec:details:voting}, an $i$-th round leader inserts a
random value $r_i$ into its block.  For the security parameter $\kappa$, which is the
number of main chain blocks after which the probability of a roll back is
negligible, the random beacon $r$ in the current round $j$ is extracted in
a two-step process, using
a random oracle, from the
previous random values
$r_{j-2\kappa},r_{j-2\kappa+1},r_{j-2\kappa+2},...,r_{j-\kappa}.$
An adversary can bias the outcome beacon $r$, but it has been
proven~\cite{daian2017snow} that short-time adversarial biases
are insufficient to get a long-term significant advantage. 

We emphasize that the use of a random beacon for leader/committee selection is
not specific to \name and other
random beacon constructions can also be used to implement the protocol. Some recent approaches, e.g., DFINITY \cite{hanke2018dfinity} and RandHound~\cite{syta2017scalable}, promise scalable
randomness and are potentially more bias-resistant than the presented mechanism. However, greater bias resistance may come at the cost of additional computational overhead.
We leave the investigation of these schemes combined with \name as future work.

\subsection{Rewards}
\label{sec:details:rewards}
\name introduces the rewarding scheme presented in \autoref{alg:rewards}.
Each voter supporting the previous block of the main chain receives a voter
reward $R_v$ multiplied by the number of stake units that the voter was sampled
with.  The votes of virtual blocks 
also receive rewards as soon as a block
containing them is published.  Every leader who publishes a block on the main chain
receives a leader reward $R_l$ and
an inclusion reward $R_i$ for every stake unit included in the block.
Leaders may also receive transaction fees paid by nodes but we omit them as
they are application-specific.

\begin{algorithm}[t!]
\caption{The reward procedure.}
\label{alg:rewards}
\footnotesize
\SetKwProg{func}{function}{}{}
\func{RoundReward(B)}{
    $pay(B.pk, R_l)$; // leader's block reward (+ opt. tx fees)\\
    \For{$v \in B.V$}{
        $pay(v.pk, v.s*R_v)$; // voter's reward\\
        $pay(B.pk, v.s*R_i)$; // leader's inclusion reward\\
    }
}
\end{algorithm}

The rewarding scheme in \name has several goals.  First, it aims
to
incentivize voters to publish votes supporting their strongest views
immediately.  Forked or ``late'' votes and blocks are marked in main
chain blocks but they are not rewarded although leaders still have an incentive to include
them since they strengthen common ancestor blocks with the main chain.
Therefore, voters trying to wait for a few blocks to vote for ``past best blocks''
would always lose their rewards.  Second, it incentivizes leaders to publish
their blocks on time, as a block received after the end of the round would not be part
of the main chain. After all, voters in the next round would follow their strongest
view instead, so the leader would miss out on her rewards.  Lastly, the
scheme incentivizes block leaders to include all received votes. A leader
censoring votes loses an inclusion reward proportional to the stake that
was censored. Moreover, the censoring leader weakens her own blocks.

\section{Analysis}
\label{sec:analysis}
\subsection{Probabilistic Safety}
\label{sec:analysis:safety}

In this section we show that a block $B$ is committed only if the probability that a conflicting block is committed is indeed below $p^*$. To do this, we use a novel proof technique based on \emph{statistical hypothesis testing}. A block is committed by a user when she sees enough supporting stake to conclude that it is unlikely that she is on a branch that includes only a minority of honest users. The main threat is an adversary who wants to cause a safety fault, which means that two conflicting blocks are committed by different honest users. For this to occur, two users who have \emph{different} views of the blockchain must \emph{both} see enough evidence for their blocks to commit. 

Our worst-case scenario -- i.e., the scenario in which a safety fault is most likely to occur -- is when the honest users are split during a {fork}. The adversary can make a safety fault more likely by voting on both branches of the fork simultaneously. Although equivocation can be punished in retrospect, it cannot be detected while the fork is ongoing and hence cannot be ruled out for our safety analysis.
We assume that a fraction $\advfrac \in [0,\frac{1}{3})$ of the stake is controlled by the adversary. We furthermore assume that the fork consists of \emph{two} branches, and that the honest stake is split \emph{evenly} among them. The reason is that if one branch is stronger than the other, then it is less likely that a block will be committed on the weaker branch. Similarly, there is no need to consider forks with three or more branches, because the probability of conflicting blocks being committed would be higher if two of the branches were combined. Finally, we do not consider users going offline or votes being withheld by the adversary -- this would only make one branch weaker and a safety fault less likely, and therefore constitute a weaker adversarial scenario than the one under consideration. 

The user cannot directly observe how many users are on the two branches -- however, she \emph{can} observe how many stake units support the block on her branch. The expected stake fraction on her branch in the worst-case scenario is $\advfrac + \frac{1}{2}(1-\advfrac) = \frac{1}{2}(1 + \advfrac)$, so $\frac{2}{3}$ if $\advfrac=\frac{1}{3}$ (this is the same for the users on the other branch). If she observes that the amount of stake on her branch is considerably higher than this fraction, then she has evidence to conclude that her branch is stronger. In the following, we will formulate the question of whether there is sufficient evidence to commit $B$ as a {hypothesis test}. Before we proceed, we note that for $B$ to be committed, we require that all preceding blocks have also been committed. Hence, we only focus on $B$ and its supporting subtree, and do not consider any of $B$'s ancestors (see also the example in \autoref{fig:weak_preceding}). 

Recall that there are $n \in \mathbb{N}$ stake units of equal weight, where a single node may control many stake units. We draw a committee of size $q$ stake units in every round (e.g., in \autoref{fig:weak_preceding} it holds that $q = \frac{1}{10} n$).  We assume that $n$ and $q$ remain constant throughout the duration of the fork, as these are protocol-level parameters that change infrequently. We denote the number of expected supporting stake units per round on the user's branch by $u$. We also assume that $u$ does not change during the fork -- i.e., that users do not move between branches while the fork is ongoing. We make this assumption because 1) we assume that the adversary is unable to adaptively move users between branches (as stated in \autoref{sec:pre}), and 2) if any of the honest users becomes aware of the other branch then they would be able to detect equivocation and send evidence to users on both branches. 
We assume that in a given round $l$, the user is interested in testing whether the amount of stake that supports a block $B$ that appeared in round $j$ is enough to commit it. The \emph{null hypothesis} $H_0$ asserts that the supporting stake is \emph{at most} equal to the expected worst-case support, i.e.,
\begin{equation}
H_0: u \leq \frac{1}{2}(1 + \advfrac) n. 
\label{eq:nullhypothesis}
\end{equation}
In the following, we compute the probability $p$ of observing a given amount of supporting stake in $B$'s subtree \emph{given} that $H_0$ is true. This probability is commonly called a $p$-value. If the $p$-value is below $p^*$, then we accept the alternative hypothesis $H_1 : u > \frac{1}{2}(1 + \advfrac) n$ and commit the block.
 Of course, this experiment may be repeated over the course of several slots until $B$ is finally committed, or if $B$ is dropped after the resolution of the fork. Hence, we use a \emph{sequential} hypothesis test. In the following, we first focus on calculating the probability of observing the data -- i.e., the supporting stake for the block $B$ -- given the null hypothesis for given rounds $\cround$ and $j$, and then extend this to the sequential setting.

\begin{figure}[t!]
\centering
  \includegraphics[width=.7\linewidth]{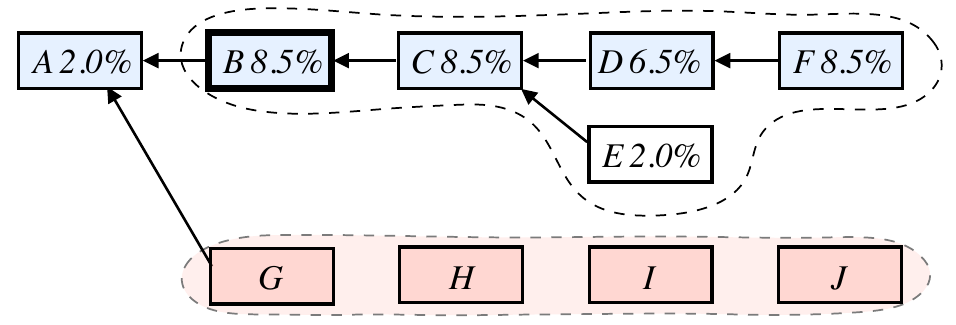}
\caption{If the user tries to commit $B$, we only evaluate the strength of $B$ and its supporting subtree, and implicitly compare it to the branch $(G,H,I,J)$. Even though $A$ is weak, and $B$ might be overturned due to a branch starting from $A$'s predecessor, we require that $A$ has already been committed before  committing $B$, so it need not be considered.}
  \label{fig:weak_preceding}
\end{figure}

Let $X_i$ be the number of stake units held by supporting users sampled in round $i$. Since the cryptographic sampling algorithm draws without replacement, we know that $X_i$ has a \emph{hypergeometric} distribution \cite{johnson2005univariate}, whose probability mass function is given by
$$
\mathbb{P} (X_i = x) = \frac{\comb{u}{x} \comb{n - u}{q-x}}{\comb{n}{q}}.
$$
The \emph{total} number of supporting units (where a single unit may feature in more than one different rounds) that is drawn in rounds \mbox{$j+1,\ldots, \cround$} is then given by
$$
T = \sum_{i = j+1}^{\cround} X_i.
$$
Since $X_1, X_2, \ldots$ are all identically distributed, we know that $T$ has the distribution of $k = \cround - j$ independent hypergeometric random variables.
Hence, to calculate the probability of observing a total amount of supporting stake $t$ given the null hypothesis, we can compute the tail probability $\P(T \geq t) = \sum_{i=t}^{kq} \P(T = i)$ under the assumption that $u = \frac{1}{2}(1+\alpha)n$. After all, of all possible values for $u$ included under $H_0$, this choice would give the highest probability.
Unfortunately, a sum/convolution of hypergeometric random variables does not have a probability mass function that is easy to compute. It can be expressed as 
$$
\P(T = t) = \sum_{x_2=0}^q \sum_{x_3=0}^q \ldots \sum_{x_{k}=0}^q \prod_{i=2}^{k} \P(X_i = x_i) \P\left(X_{1} = t - \sum_{i=2}^{k} x_i\right).
$$
This expression is derived in Lemma~\ref{lm:convolutions} in the appendix. The intuition behind it is as follows: for all possible sequences $(x_2,x_3,\ldots,x_m)$, we calculate the probability of observing it, and then multiply this by the probability that $X_1$ is exactly $t$ minus the sum of the sequence. This is what is implemented in \autoref{alg:probcomphyper}: the recursion encapsulates the repeated sum, while the final computation of $\P(X_1 = t - \sum_{i=2}^k x_i)$ is done in lines~\ref{ln:cdfstart}~through~\ref{ln:cdfend}. Numerically, it can be simplified somewhat -- e.g., it is not necessary to consider sequences whose sum already exceeds $t$. However, to compute this probability it is inescapable that the number of operations is roughly proportional to $q^k$, which is very large even for moderate values of $k$. 
This is the intuition behind the if-statement in line~\ref{ln:computationthres} of \autoref{alg:commit}: the exact computation is only performed if $q^{m-j} = q^k$ is small enough, i.e., below some threshold that depends on the node's processing power.

One less computationally expensive approach would to find an approximation for the distribution of $T$. Good candidates include a single hypergeometric (with parameters $kq$, $ku$, and $kq$ instead of $n$, $u$, and $q$), a binomial (for which we draw \textit{with} instead of \textit{without} replacement), a Poisson (which approximates the binomial distribution), or a normally distributed (by the Central Limit Theorem) random variable. However, in the following we instead focus on establishing bounds using the Cram\'er-Chernoff method \cite{boucheron2013concentration}. The reasons for this are twofold. First, it gives us a strict upper bound on the probability of interest regardless of the parameter choice, which is safer than an asymptotically valid approximation. Second, it can be more easily generalized to settings where the distribution of the random variables is slightly different. We have also investigated other methods for bounding the tail probabilities of $T$, e.g., the Chernoff-Hoeffding bound, and the approaches of \cite{klenke2010stochastic} and \cite{teerapabolarn2015binomial}. However, we found that the Cram\'er-Chernoff bounds were the sharpest while still being computationally feasible.

The Cram\'er-Chernoff method can be used to find an upper bound on the probability $\P(X \geq x)$ for a random variable $X$. Its basis is Markov's inequality \cite{boucheron2013concentration}, which states that for any nonnegative random variable $X$ and $x > 0$,
$$
\P(X \geq x) \leq \E(X)/x.
$$
It can then be shown \cite{boucheron2013concentration} that for any $\lambda \geq 0$,
$$
\P(X \geq x) = \P(e^{\lambda X} \geq e^{\lambda x}) \leq \E(e^{\lambda X}) e^{-\lambda x}  = e^{-(\lambda x - c_X(\lambda))},
$$
where $c_X(\lambda) = \log(\E(e^{\lambda X}))$.
Since this holds for all $\lambda \geq 0$, we can choose $\lambda$ such that the bound is sharpest. Let 
$$
r_{X}(x) = \sup_{\lambda \geq 0} (\lambda x - c_{X}(\lambda)).
$$
It then holds that 
$$
\P(X \geq x) \leq e^{-r_X(x)}.
$$
In particular, if \mbox{$T = X_1 + \ldots + X_k$}, such that all $X_i$, $i \in \{1,\ldots,k\}$ are mutually independent and have the same probability distribution as $X$, then
$$
c_T(\lambda) = \log(\E(e^{\lambda T})) = \log\left(\prod _{i=1}^k \E(e^{\lambda X_i})\right) = k c_{X}(\lambda)
$$
and hence
$$
r_T(k x) = \sup_{\lambda \geq 0} (\lambda k x - k c_{X}(\lambda)) = k r_X(x),
$$
and therefore
$
\P(T \geq t) \leq e^{-k r_X(t/k)}.
$
Common names for $r_{X}$ include the \textit{Legendre-Fenchel transform}\footnote{Strictly speaking, only after broadening the range of the supremum.} 
or the large-deviations \emph{rate function} of $X$ after its use in Cram\'er's theorem for large deviations \cite{boucheron2013concentration,dembo1998large}. In our context, $t/k$ is the average supporting stake accumulated per round. As shown in Lemma~\ref{lm:meansmalllft} in the appendix, if $t/k$ is higher than the expected worst-case supporting stake, which in our case is $q u/n$, then the probability that the adversary wins decreases exponentially in $k$. In fact, the function $r_X$ represents the exponential rate at which this probability decays (hence  the name ``rate function''). 

As an example, let $n=1500$, $u=1000$, and $q=150$. We find that $r_{X}(\lfloor \frac{3}{4} q \rfloor) = r_{X}(112) \approx 2.50$. This means that the probability under the null hypothesis that we observe an average support of $75\%$ of the committee for $k$ rounds in a row decreases at least exponentially with a factor of $e^{-2.50} \approx 0.082$ per round. This is displayed in \autoref{fig:probgraph}, for $\bar{s} = t/q = 75\%$. After 15 rounds, the probability of observing support from, on average, $75\%$ of the committee is below $10^{-16}$ under $H_0$.

The function $c_{X}(\lambda)$ does not have a convenient closed-form expression: it is equal to $_2F_1(-q,-u,-n,1-e^{\lambda})$ \cite{zwillinger2000crc}, where $_2F_1$ is the Gaussian or ordinary hypergeometric function.\footnote{\url{https://en.wikipedia.org/wiki/Hypergeometric\_function}} 
Hence, the rate function has to be evaluated numerically. Since $c_{X}(\lambda)$ is convex on $\lambda \geq 0$ (see also Lemma~\ref{lm:convexcgf} in the appendix), there is only one local optimum and we can therefore use some variation of
golden-section search \footnote{\url{https://en.wikipedia.org/wiki/Golden-section\_search}} 
to find it. The procedure in \autoref{alg:probboundcramer} is as follows: we first determine a search range for $\lambda$ by widening a base interval until we observe that the function decreases at some point in the interval. We then iteratively narrow the interval until its width is below some accuracy threshold. 
A baseline implementation is given in \autoref{alg:helpers} in the appendix. 
Computation of the rate function is still not trivial, although it depends on the size of $q$: for $q=150$, we have found that roughly $15$ instances can be computed per second on a MacBook Pro with a 2.5 GHz Intel Core i7 processor, compared to $280/s$ for $q=30$ and $0.6/s$ for $q=750$. Still, although there is no theoretical limit on the number of blocks that can be committed in a single round via the \textit{Commit} function in \autoref{alg:commit}, in practice it is unlikely that more than several blocks are committed in a single round, as the evidence for committing is accumulated slowly. Furthermore, evaluations of $r_X(x)$ can be cached to speed up the \textit{Commit} function in later rounds.

\begin{figure}[ht!]
\centering
    \begin{subfigure}[t]{.24\textwidth}
      \centering
      \includegraphics[width=\linewidth]{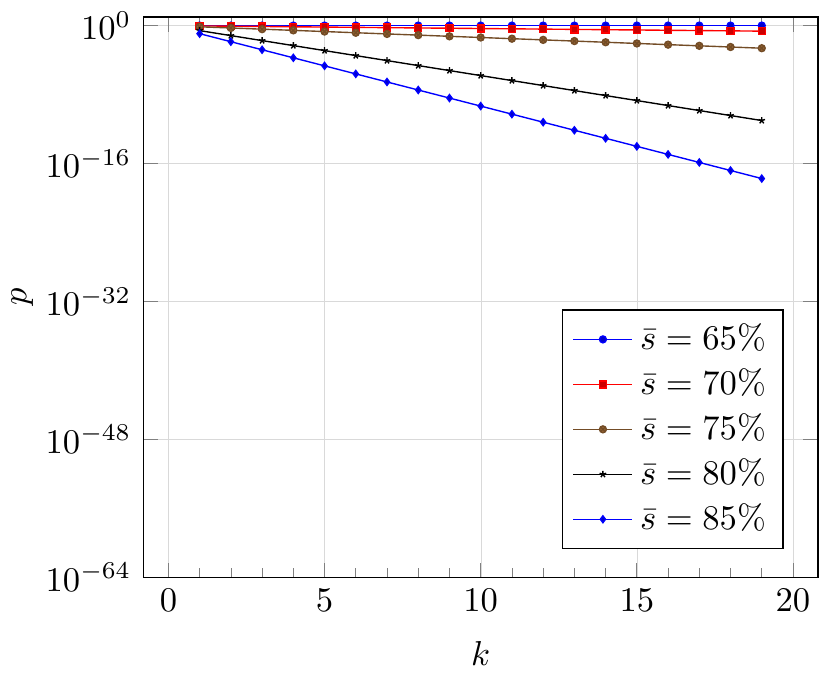}
      \caption{$q=30$}
      \label{fig:sub-first}
    \end{subfigure}
    \begin{subfigure}[t]{.24\textwidth}
      \centering
      \includegraphics[width=\linewidth]{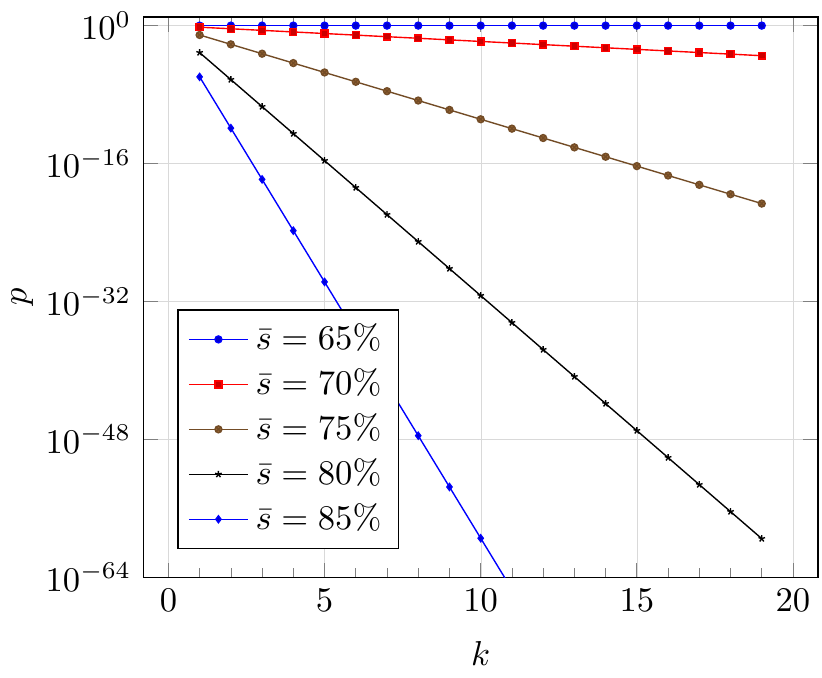}
      \caption{$q=150$}
      \label{fig:sub-second} 
    \end{subfigure}
    \begin{subfigure}[t]{.24\textwidth}
      \centering
      \includegraphics[width=\linewidth]{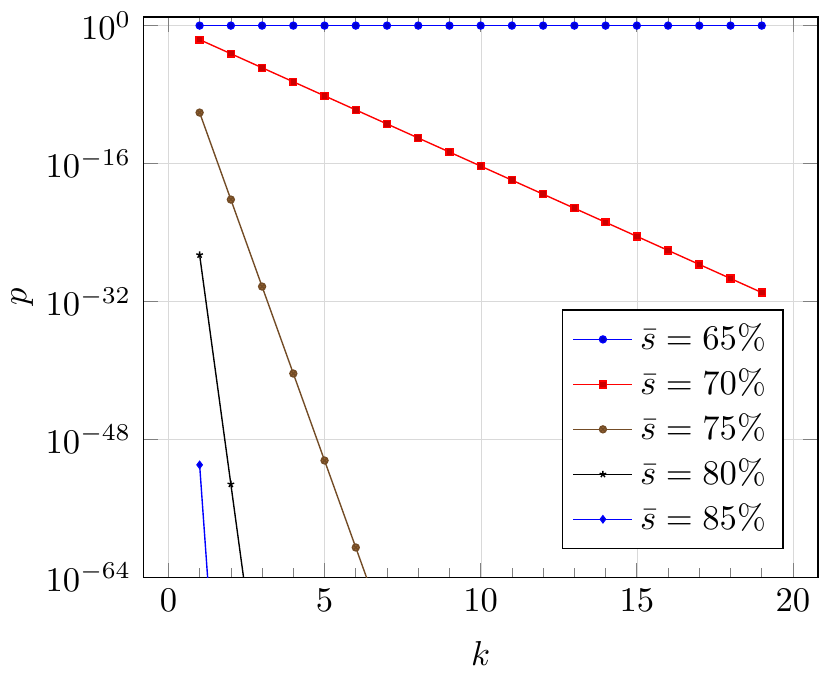}
      \caption{$q=750$}
      \label{fig:sub-third}
    \end{subfigure}
\caption{The Cram\'er-Chernoff bound $p$ for the probability that a block $B$ is overturned, assuming that $B$ and its supporting subtree in the $k-1$ subsequent rounds contain the votes of a fraction $\bar{s}$ of the committee. In these figures, $n = 1500$, $u = 1000$, and $q$ varies. Clearly, when the committee size is larger, the $p$-values vanish faster and nodes are able to commit sooner. }
  \label{fig:probgraph}
\end{figure}

\autoref{fig:probgraph} depicts the evolution of the bounds if the same average supporting stake $\bar{s}$ is observed over $k$ rounds consecutively. It is clear that, as expected, larger committee sizes mean that large deviations are less likely, and therefore that the user is able to commit sooner. Hence, there is a trade-off between security and efficiency. Even if $q=30$ and on average 24 (i.e., $\bar{s} = 80\%$) stake units vote to support a block, then the $p$-value decreases by almost $75\%$ per round (i.e., $e^{-r_X(24)} \approx 0.263$). 

Although we have so far focused on a single test conducted in round $m$ about a block $B$ in slot $j$, in practice the test for $B$ will be performed multiple times if there is not enough evidence to commit immediately. Each time the test is performed, there is some probability of error, so we need to account for this accumulation. The most straightforward way to achieve this is as follows: if we conduct the test for the $i$th time, we use $p^* \cdot \gamma^i$ for some $\gamma \in (0,1)$ as our threshold. The total probability of error after a potentially infinite number of trials is then bounded by $\sum_{i=1}^{\infty}  p^* \gamma^i = \frac{\gamma}{1-\gamma} p^*$. For example, if $\gamma = \frac{9}{10}$, then we need to multiply our probability threshold by $\frac{1-\gamma}{\gamma} = \frac{1}{9}$ (e.g., if we originally had set a threshold of $10^{-16}$ then we would now use $\approx$$1.11 \cdot 10^{-17}$), and the threshold for the $p$-value would become lower (so harder to meet) by $10\%$ each round. This is not a major obstacle: as mentioned before, even for a committee size of $30$ and an average supporting stake of $80\%$, the $p$-value decreases considerably faster than by $10\%$ per round, and the multiplicative factor of $1/9$ is met after only 2 additional rounds.

Using the above, we can prove the following theorem. \begin{theorem}
    \label{theorem:liveness}
\name achieves probabilistic safety. 
\end{theorem}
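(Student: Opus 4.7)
The plan is to verify that the commit rule of Algorithm~\ref{alg:commit} never commits $B$ unless a rigorous upper bound on the probability of a safety fault involving $B$ falls below the client's threshold $p^*$, appropriately rescaled for repeated testing. The argument proceeds by (i) reducing an arbitrary adversarial execution to the canonical worst-case fork analyzed in Section~\ref{sec:analysis:safety}, (ii) verifying that both branches of the commit procedure return valid upper bounds on the tail probability $\P(T \geq t \mid H_0)$, and (iii) taking a union bound over the sequential tests.

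For step (i), I would argue that among all adversarial strategies leading to two conflicting commits, the symmetric two-branch fork with the honest stake split evenly and the adversary equivocating on both branches dominates: asymmetric splits make the weaker branch less likely to reach threshold while leaving the stronger unchanged; additional branches only dilute the adversarial support per branch and can be merged pairwise without reducing the safety-fault probability; and honest offline or withholding behavior strictly lowers either $u$ on the losing branch or the stake reaching the target branch. Because all missing honest stake is pessimistically attributed to the phantom adversarial subtree and adversarial stake can equivocate, the expected per-round support on the client's branch under the null is at most $\frac{1}{2}(1+\alpha)n$, matching \eqref{eq:nullhypothesis}.

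For step (ii), I would show that \textit{HyperGeomProb} computes $\P(T \geq t)$ exactly by expanding the convolution of $k$ i.i.d.\ hypergeometric summands via the recursion derived in Lemma~\ref{lm:convolutions}, with the base case collapsing to the tail of a single hypergeometric between lines~\ref{ln:cdfstart}~and~\ref{ln:cdfend}. For the regime where the exact computation is infeasible, \textit{CCBound} returns $e^{-k\,r_X(t/k)}$, which is a valid upper bound by the Cram\'er--Chernoff derivation already given in the text; convexity of $c_X(\lambda)$ on $\lambda \geq 0$ (Lemma~\ref{lm:convexcgf}) guarantees that the golden-section search in \textit{MaximizeRateFunc} cannot overshoot the true supremum, so the reported rate is at most the true rate function value and thus the returned bound remains conservative.

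For step (iii), since \textit{Commit} may be invoked in every subsequent round on the same candidate block, I would apply a union bound using the geometric rescaling outlined at the end of Section~\ref{sec:analysis:safety}: using $p^* \gamma^i$ as the threshold on the $i$-th trial yields $\sum_{i=1}^\infty p^* \gamma^i = \tfrac{\gamma}{1-\gamma} p^*$ as a uniform bound on the probability that \emph{any} test spuriously accepts $H_1$. Combining with step (ii), the total probability that $B$ is ever committed under $H_0$ stays below the client's risk level, establishing probabilistic safety. The main obstacle is step (i): convincing the reader that no exotic adversarial schedule---in particular one exploiting asynchronous periods, delayed vote release, or adaptive equivocation across several simultaneously live forks---produces a higher conflicting-commit probability than the clean symmetric two-branch fork. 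The conservative lumping of all unknown stake into the worst-case bound $u \leq \tfrac{1}{2}(1+\alpha)n$ is what ultimately absorbs these strategies, and the bulk of the writeup will need to justify this reduction carefully.
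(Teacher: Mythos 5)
Your proposal is correct and follows essentially the same route as the paper: the reduction to the symmetric two-branch fork, the validity of both the exact hypergeometric convolution and the Cram\'er--Chernoff bound as tail bounds under $H_0$, and the geometric rescaling $p^*\gamma^i$ for the sequential test are exactly the ingredients the paper assembles in \autoref{sec:analysis:safety}. The only difference is presentational: the paper's stated proof of the theorem is just the final case analysis (even split gives $(p^*)^2$ for both users committing; uneven split means the user on the weaker branch commits with probability below $p^*$), which your writeup leaves implicit in the concluding sentence but which follows directly from your Type-I-error bound.
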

\begin{proof}
A safety fault occurs when two honest users on conflicting branches commit a block on their branch. If the honest stake is split evenly between the two branches, then the probability that both users commit is $(p^*)^2$, which is much lower than $p^*$.\footnote{It can be even lower if there is negative correlation between the stake on the two branches, e.g., if the implementation of the random beacon is such that the same committee is chosen for different blocks in the same round.} In the case where the stake is \textit{not} evenly split among the two branches, then if the user is on the weaker branch, the probability that she ever commits $B$ is below $p^*$. If the user is instead on the stronger branch, then the probability that the honest users on the weaker branch ever commit is below $p^*$. Hence, the probability of both $B$ and a conflicting block being committed is below $p^*$ in all settings. Our safety property is therefore satisfied.
\end{proof}

\myparagraph{Long-Range Attacks}
An additional benefit of the commitment scheme in \name is its potential to act as a first line of defense against \textit{long-range attacks}. A long-range attack occurs when the adversary obtains a set of keys that control a large majority of stake in the distant past. Unlike PoW, blocks in PoS protocols are not computationally hard to create, which means that it is feasible for an adversary to create an extremely long branch by herself (e.g., $10^{6}$ blocks or more). 
If the adversary forks the chain in the far past and creates a branch that is stronger than the main branch, then the attacker can convince honest nodes to revert a large number of their blocks in favor of adversarial blocks. We assume that the system's users will eventually be able to agree that the original branch is the ``correct'' branch after a period of out-of-band consultation. However, in the meantime honest nodes are vulnerable to double-spend attacks as they may accept transactions that conflict with transactions on the eventual main chain.

A user can mitigate long-range attacks by instructing her client to \textit{finalize} committed blocks, i.e., to not revert them without a manual reset. However, if the node commits a block on the losing branch of a temporary fork, then it will remain out-of-sync with the rest of network until the user intervenes. This again leaves the node vulnerable to double-spend attacks, and its votes and blocks will not earn rewards on the main chain. It is therefore only profitable to use the above defense mechanism if its expected gains outweigh its costs. Although not all quantities involved in these costs and gains are clear a priori, the user \textit{does} have a bound on the probability of committing a block on the losing branch of a fork, namely $p^*$ as discussed above. In the following, let $C_1$, $p_1$, and $C_2$ be the user's guesses for the expected total cost of a long-range attack, the per-block probability of a long-range attack,\footnote{To illustrate this: if the user expects long-range attacks to occur once per year, and if there are $10^6$ blocks per year, then $p_1 = 10^{-6}$.} and the expected total cost of being out-of-sync with the network until a manual reset, respectively. Then the gains of this defense mechanism (i.e., nullifying the cost of long-range attacks) outweigh its costs if 
$C_1 p_1 > C_2 p^*$, i.e., if $p^* < \frac{C_1}{C_2} p_1.$

In practice, $C_1$ and $C_2$ will depend strongly on the nature of the user. For example, as witnessed by the recent $51\%$ attacks on Bitcoin Gold
and Ethereum Classic,\footnote{See also the online references \href{http://bitcoinist.com/51-percent-attack-hackers-steals-18-million-bitcoin-gold-btg-tokens}{here} and \href{https://www.coindesk.com/ethereum-classic-suffers-second-51-attack-in-a-week}{here}.} cryptocurrency exchanges are a valuable target for double-spend attacks -- after all, exchanges routinely process high-value transactions and their actions (e.g., transactions on another blockchain) are often irreversible. These nodes can reflect this by choosing a high value for $C_1$. By contrast, a user who controls stake units will miss out on voting and block creation rewards when she is out-of-sync with the network, so she will judge $C_2$ to be much higher than $C_1$ if she is unlikely to be targeted by attackers. 

In general, if the user deems long-range attacks to not be likely or costly, or if she deems the costs of being out-of-sync to be high, then she will judge the defense mechanism to be worth it only if $p^*$ is low. As can be seen from Figure~\ref{fig:probgraph}, the $p$-value of our test decreases rapidly over time, so even if $p^*$ is extremely low, then blocks will be committed in a timely manner -- e.g., for $q=30$ and $24$ supporting stake per block on average, $p^*$ reaches $2^{-256} \approx 10^{-77.06}$ after only 133 rounds. Furthermore, the user is free to set a lower threshold for finalization than for regular commitment decisions, and compare the calculated $p$-value for each block to both thresholds. Hence, she can set the threshold for finalization very low, and still be safe from long-range attacks from fewer than $10^3$ blocks in the past at negligible expected cost.

\subsection{Liveness}
In our analysis we assume the partially synchronous network model, where after
the GST the network is synchronous for $t$ rounds.
We follow the same setup in \autoref{sec:analysis:fairness} and
\autoref{sec:analysis:rewards}.
To prove liveness we use two lemmas (see App.~\ref{app:liveness}).  In short,
they state that with increasing $t$, after $t$ rounds, a)  the main chain
includes at least $t(1-\alpha)q$ stake units (Lemma~\ref{lemma:strongchain}), and b)
the main chain includes $m$ blocks that the honest participants voted for, where
$t\geq m\geq \lceil t(1-2\alpha)\rceil$ (Lemma~\ref{lemma:nchain}). As stated in \autoref{sec:pre}, honest nodes are allowed to be temporarily offline before but not after the GST, and permanently offline honest nodes are grouped with the adversarial nodes.

\begin{theorem}
    \label{theorem:liveness}
\name achieves liveness. 
\end{theorem}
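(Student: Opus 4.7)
The plan is to combine the two liveness lemmas with the exponential decay of the $p$-value established in the safety analysis. First, I would note that after the GST, the network is synchronous, so every broadcast vote and block reaches all honest nodes within $\Delta$ seconds, and any transaction submitted to the network is disseminated to every honest node that will subsequently be elected as a leader. Since leaders are sampled in proportion to stake and honest nodes control a $(1-\alpha) > \frac{2}{3}$ fraction of the stake, any valid pending transaction is included in an honest block with probability rapidly approaching one as $t$ grows; specifically, by Lemma~\ref{lemma:nchain}, at least $\lceil t(1-2\alpha)\rceil$ of the blocks on the main chain between GST and round $\text{GST}+t$ were voted for by honest participants, so an honest leader who respects the mempool will eventually insert the transaction into a main-chain block.

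Next I would argue that the \emph{inclusion} of a block on the main chain implies its eventual \emph{commitment}. Fix a target block $\commitb$ at round $j$, and let $k = \cround - j$ denote the number of rounds elapsed since its proposal. By Lemma~\ref{lemma:strongchain}, the subtree supporting $\commitb$ accumulates at least $(1-\alpha)q$ stake units per synchronous round, so the average supporting stake per round $t/k$ converges to at least $(1-\alpha)q$, which strictly exceeds the worst-case boundary $\tfrac{1}{2}(1+\alpha)q$ that defines $H_0$ whenever $\alpha < \tfrac{1}{3}$. Invoking the exponential bound $\P(T \geq t) \leq e^{-k\, r_X(t/k)}$ from the safety analysis, together with Lemma~\ref{lm:meansmalllft}, the $p$-value computed by \textit{CalculateProb} decays at a strictly positive exponential rate in $k$. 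Hence, for any client threshold $p^* > 0$, there exists a finite $k^*$ after which \textit{CalculateProb} falls below $p^*$ and the \textit{Commit} procedure returns $\commitb$ as committed. The same argument applied inductively to $\commitb$'s ancestors (all of which accumulate at least as much support) ensures that the precondition of \autoref{alg:commit}---namely, that all earlier blocks are already committed---is met in bounded time.

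Combining the two parts, a valid transaction is eventually placed into some main-chain block by an honest leader, and that block, along with every preceding block, is eventually committed by every honest client regardless of their choice of $p^*$. I would conclude by observing that the argument is oblivious to the random beacon's bias within the range tolerated by \autoref{sec:details:beacon}, since the lemmas themselves are stated under the same post-GST assumption. The main obstacle I anticipate is the quantitative step of translating Lemma~\ref{lemma:strongchain}'s per-round stake growth into a lower bound on the average $t/k$ uniformly over all target blocks; this requires handling the transient period immediately after GST during which a few rounds may still yield virtual-block stragglers, but since only a bounded number of such rounds can occur before the synchronous regime dominates, the exponential decay eventually absorbs any constant loss, and liveness follows.
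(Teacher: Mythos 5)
Your proposal is correct and follows essentially the same route as the paper's proof: it uses Lemma~\ref{lemma:nchain} to argue that an honest leader eventually places the transaction in a main-chain block, then Lemma~\ref{lemma:strongchain} together with Lemma~\ref{lm:meansmalllft} to show that the observed per-round support $(1-\alpha)q$ exceeds the null-hypothesis mean $\tfrac{1}{2}(1+\alpha)q$, so the Cram\'er--Chernoff bound on the $p$-value decays exponentially and eventually drops below $p^*$. The only cosmetic difference is that you compare against $\tfrac{1}{2}(1+\alpha)q$ directly while the paper substitutes $\alpha=\tfrac{1}{3}$ and compares against $\tfrac{2}{3}q$ via the weak law of large numbers; the substance is identical.
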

\begin{proof}
    We show that the probability that within $t=2t'$ rounds an honest node
    appends a block to the main chain and the block is committed by every honest
    node is overwhelming, with $t$ increasing.

    By using Lemma~\ref{lemma:ngrowth}, we obtain that the $m'$ -- which increases
    proportionally with $t'$ -- blocks of the main chain were supported by, and thus
    visible to, the honest majority.  Now, to violate liveness, the adversary has to
    be elected as a leader for all of those contributed $m'$ rounds.\footnote{In
    such a case the adversary could append blocks with no transactions, append
    them to a different (weaker) fork, or not publish them at all.} The
    adversary is elected as a round leader with the probability proportional to
    her stake possession $\alpha<\frac{1}{3}$.  Therefore, the probability that at least one
    honest node adds a block in those $m'$ visible rounds is $1-\alpha^{m'}$, which
    with increasing $t'$ goes to $1$.

    Now, after the first $t'$ rounds, the block will be on the main chain, thus honest
    nodes over the next $t'$ rounds will be supporting the block by their
    stake of the total amount at least $t'(1-\alpha) q$ units (see
    Lemma~\ref{lemma:strongchain}) with $t'$ increasing. 
    We assume that $\alpha < \frac{1}{3}$, which means that there exists some $\epsilon>0$ such that \mbox{$\alpha = \frac{1}{3} - \epsilon$}. 
    This means that the expected amount of supporting stake per round will be $(\frac{2}{3} + \epsilon) q$, even if the adversarial nodes never vote. By the weak law of large numbers, the probability that the average supporting stake after $t'$ rounds is not above $\frac{2}{3}Q$ goes to zero as $t'$ goes to infinity. We know by Lemma~\ref{lm:meansmalllft} that if the observed average supporting stake is greater than the mean, the rate function evaluated at this average is greater than 0. By substituting the threshold $\alpha = \frac{1}{3}$ into \eqref{eq:nullhypothesis}, we find that $u \leq \frac{2}{3}n$, and that the expected per-round supporting stake under the null hypothesis $q u/n$ equals at most $\frac{2}{3}q$. Since we therefore observe more than the mean under the null hypothesis on average, the per-round decay rate will be greater than 1, and the upper bound on the test $p$-value goes to $0$ as $t'$ goes to infinity. As such, there will be some $t'$ at which the $p$-value goes below $p^*$, which means that the user can commit. This completes the proof for liveness.
\end{proof}

In practice, the length of $t$ depends on the values of $n$, $q$, $\alpha$,
and $p^*$, and can be derived from the bound given in
\autoref{sec:analysis:safety}.

\subsection{Fairness}
\label{sec:analysis:fairness}
We define $R(t)$ as the total expected reward of an honest node with $\beta$
stake fraction during $t$ synchronous rounds after GST.
In the following, we analyze the fairness of \name.
\begin{theorem}
    \label{theorem:fairness}
    $R(t)=\Theta(t)$
\end{theorem}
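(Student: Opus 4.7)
The plan is to establish matching $O(t)$ and $\Omega(t)$ bounds on $R(t)$, exploiting the reward structure from \autoref{alg:rewards} together with the liveness lemmas stated just above the theorem.

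For the upper bound, I first observe that every round contributes a bounded per-round reward to any single node, since the sampling procedure selects at most $q$ voter slots and at most $l$ leader slots per round. Hence the node can collect at most $q R_v + l(R_l + q R_i)$ per round, regardless of its stake. Summing over $t$ rounds gives $R(t) \leq c_1 t$ for some constant $c_1$ depending only on the protocol parameters $q$, $l$, $R_v$, $R_l$, and $R_i$. This direction is immediate and requires no probabilistic argument.

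For the lower bound, I would argue that the honest node's expected per-round payoff is bounded below by a positive constant proportional to $\beta$. The cryptographic sampling method in \autoref{alg:election} draws uniformly from the stake array, so in each round the node is expected to be selected $\beta q$ times as voter and $\beta l$ times as leader. After GST, network synchrony ensures that an honest voter supports the block which will, with high probability, lie on the eventual main chain: by Lemma~\ref{lemma:nchain} at least a $\lceil (1-2\alpha)\rceil$ fraction of the $t$ rounds produce honest-led main chain blocks which include the contemporary honest votes, and Lemma~\ref{lemma:strongchain} further guarantees that at least $t(1-\alpha)q$ stake units are aggregated on the main chain. Taking the intersection, at least a constant fraction $\gamma>0$ of the node's voter slots yield votes that appear in main chain blocks and thus earn $R_v$ per unit. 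A symmetric argument applies to leader slots: when the honest node is a leader, its proposal becomes the main chain block with probability bounded below by a positive constant depending on $\alpha$, earning $R_l$ plus the inclusion bonus. Summing these contributions gives $R(t) \geq c_2 \beta t$ for some constant $c_2>0$, establishing the $\Omega(t)$ side.

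Combining the two bounds yields $R(t) = \Theta(t)$. The main obstacle I anticipate is making precise the ``constant fraction of votes are included'' claim, since an adversarial leader in some round may censor the honest node's vote; however, by Lemma~\ref{lemma:nchain} an at-least-$(1-2\alpha)$ fraction of main chain rounds have honest leaders, and by the independence of voter and leader sampling, the honest node's votes land in honest-led blocks with frequency bounded away from zero. The rest is an application of linearity of expectation over $t$ rounds, which will also show that the constants are fully determined by $\alpha$, $\beta$, $n$, $q$, $l$, and the reward constants, without any dependence on $t$.
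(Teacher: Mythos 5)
Your proposal is correct and follows essentially the same route as the paper: expected per-round voter and leader rewards ($\beta q R_v$ and $\beta(R_l+qR_i)$), multiplied by the number of main-chain rounds guaranteed by Lemma~\ref{lemma:nchain}, with a $(1-\alpha)$ correction for votes censored by adversarially-led blocks. The only cosmetic difference is that your $O(t)$ direction uses a deterministic per-round reward cap rather than the expected per-round reward $e_l+e_v$, which is equally valid.
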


\begin{proof}
    Every round (see \autoref{alg:rewards}), a node has the expected voter's
    reward $e_v=\beta qR_v,$ and the expected leader's reward of
    $e_l=\beta(R_l + qR_i)$ (contributions of
    inclusion rewards $R_i$ may vary and in the worst
    case $e_l$ can be $\beta R_l$).  As shown in
    Lemma~\ref{lemma:ngrowth}, honest nodes participate in $n$ blocks of the
    main chain during $t$ rounds, with $n$ growing proportional to $t$.  We
    assume that an adversary can censor the node's votes (removing her voter
    rewards) but that will not happen in $n(1-\alpha)$ rounds on average (i.e., when
    an honest leader is elected).  Thus 
    $n(e_l + e_v)\geq R(t)\geq n(e_l + e_v(1-\alpha)),$
    and given the bounds for $n$ (Lemma~\ref{lemma:ngrowth}), the
    following holds

    {\hfill $t(e_l+e_v)\geq R(t) \geq \lceil t(1-2\alpha)\rceil(e_l+e_v(1-\alpha)).$}
\end{proof}

Another advantage of our reward scheme is that it rewards active nodes
frequently.  In every round, $q$ stake units are rewarded for voting and one
leader obtains the leader reward.  The rewards are given uniformly at random and
proportional to stake possession. In an example setting where
$n=10,000$, $q=200$, and $l=1$, even a node with a small stake possession
$\beta=1$ would receive a voter reward every 50 rounds and a leader reward
every $10^4$ rounds, on average.  With a realistic round time $2\Delta=5s$,
this would be 250 seconds and 13.9 hours, respectively.  By contrast, systems
that award only leaders would not give these frequent voter rewards.  For
instance, in Bitcoin (where only leader awards exist and rounds last 10
minutes on average) the node would receive a reward every 70 days on average.
By combining frequent rewards with fairness (\autoref{theorem:fairness}), \name
minimizes the reward variance which is seen as the root cause of creating large
mining pools which introduce centralization into the system.

\subsection{Rewards and Incentives}
\label{sec:analysis:rewards}
To reason about rewards in \name, we use a definition similar to the one by
Pass and Shi in~\cite{pass2017fruitchains}. We call a protocol's reward scheme
\textit{$\beta$-coalition-safe} if, with overwhelming probability, no $\beta' <
\beta$ fraction coalition can gain more than a multiplicative factor (1 +
$\epsilon$) of the total system rewards.  Intuitively, it means that with
overwhelming probability, for any coalition of nodes, the fraction of their
rewards  (within a time period) is upper-bounded by $(1+\rho)R(t)$, while a solo
node is guaranteed to receive at least $(1-\rho)R(t)$. Thus, the multiplicative
increase in the reward is
$\frac{1+\rho}{1-\rho} \leq 1+\epsilon$.

Using this definition, we show how relationships between rewards in \name
influence its coalition-safety.  As shown in  \autoref{theorem:fairness}, the
total reward of a node with $\beta$ stake fraction during $t$ synchronous
rounds is between $n(e_l + e_v)$ and  $n(e_l + e_v(1-\alpha))$.  Thus
$\frac{1+\rho}{1-\rho} = \frac{n(e_l + e_v)}{n(e_l + e_v(1-\alpha))}$
and the following holds: $\rho = \frac{\alpha e_v}{2e_l+e_v(2-\alpha)}$. We also have
$\frac{n(e_l + e_v)}{n(e_l + e_v(1-\alpha))} \leq (1+\epsilon)$ owing to
$\frac{1+\rho}{1-\rho} \leq (1+\epsilon)$, resulting in $\epsilon \leq
\frac{\alpha e_v}{e_l+e_v(1-\alpha)}.$

As we want  to minimize $\rho$ and $\epsilon$, owing to $e_v=\alpha q R_v,$ and
$e_l=\alpha(R_l + q R_i)$, we wish to minimize $\frac{\alpha q R_v}{2(R_l +
q R_i)+q R_ve_v(2-\alpha)}$ and $\frac{\alpha q R_v}{R_l + q R_i+q R_ve_v(1-\alpha)}$,
thus, $R_l$ should be large enough to make $\rho$ and $\epsilon$ smaller, i.e.,
$R_l\gg R_v$ and $ R_l\gg R_i$. 

We suggest that $R_v$ is higher than $R_i$. The intuition behind this choice is
that as voter rewards $R_v$ are received frequently when compared with inclusion
rewards $R_i$ received only by a leader.  A too high $R_i$ would contribute to a
high reward variance and could incentivize nodes to join their stake into pools
(as in PoW systems), introducing centralization risks.  Therefore, with
$R_v>R_i$, we propose the following relationships between rewards in \name:
\quad $R_l \gg R_v > R_i.$

Our rewarding scheme (\autoref{sec:details:rewards}), aims to incentivize
\begin{inparaenum}[a)]
    \item voters to broadcast their votes immediately,
    \item leaders to broadcast their blocks immediately, and
    \item leaders to include all received votes in their blocks.
\end{inparaenum}
We informally reason about its incentive-compatibility in a setting with
synchrony, a single leader elected per round, and rational nodes (optimizing
only for their rewards).  If a voter does not publish its vote on time, it will
not reach the round's leader, who subsequently cannot include this vote in her
block, thus the voter loses her reward $R_v$.  Similarly, if the leader does not
publish her block immediately, it is not received by other nodes, who in the
next round will vote for a virtual block, ignoring the late block and hence not awarding the rewards $R_v$ and $R_i$
to the leader.  If the leader publishes
blocks but without some of the received votes, then the leader is losing the inclusion
reward $R_i$ proportional to the stake of the ignored votes. 

\subsection{Scalability and Parametrization}
\label{sec:analysis:sim}
In order to evaluate how different configurations influence \name's performance
and security, we have built a \name simulator and conducted a series of experiments.
For our simulations, we introduced 5000 active nodes with one stake unit each
(thus $n=5000$) and a single leader per round ($l=1$). The number of nodes 
is
comparable with the estimated number of all active Ethereum nodes~\cite{ethnodes}.  To model the network
latencies while propagating messages we used the data from
\url{https://ethstats.net/}, which in real-time collects various information
(like the block propagation time) from volunteer Ethereum nodes.  In this
setting, we ran simulations for different values of $q$ and $\Delta$, where
each simulation run was for 10000 blocks.
We measured the block and vote stale rates $\psi_b$ and $\psi_v$, which respectively describe how
many blocks and votes do not make it onto the main chain.
\begin{figure}[t!]
    \centering
    \begin{subfigure}{.24\textwidth}
      \centering
      \includegraphics[width=\linewidth]{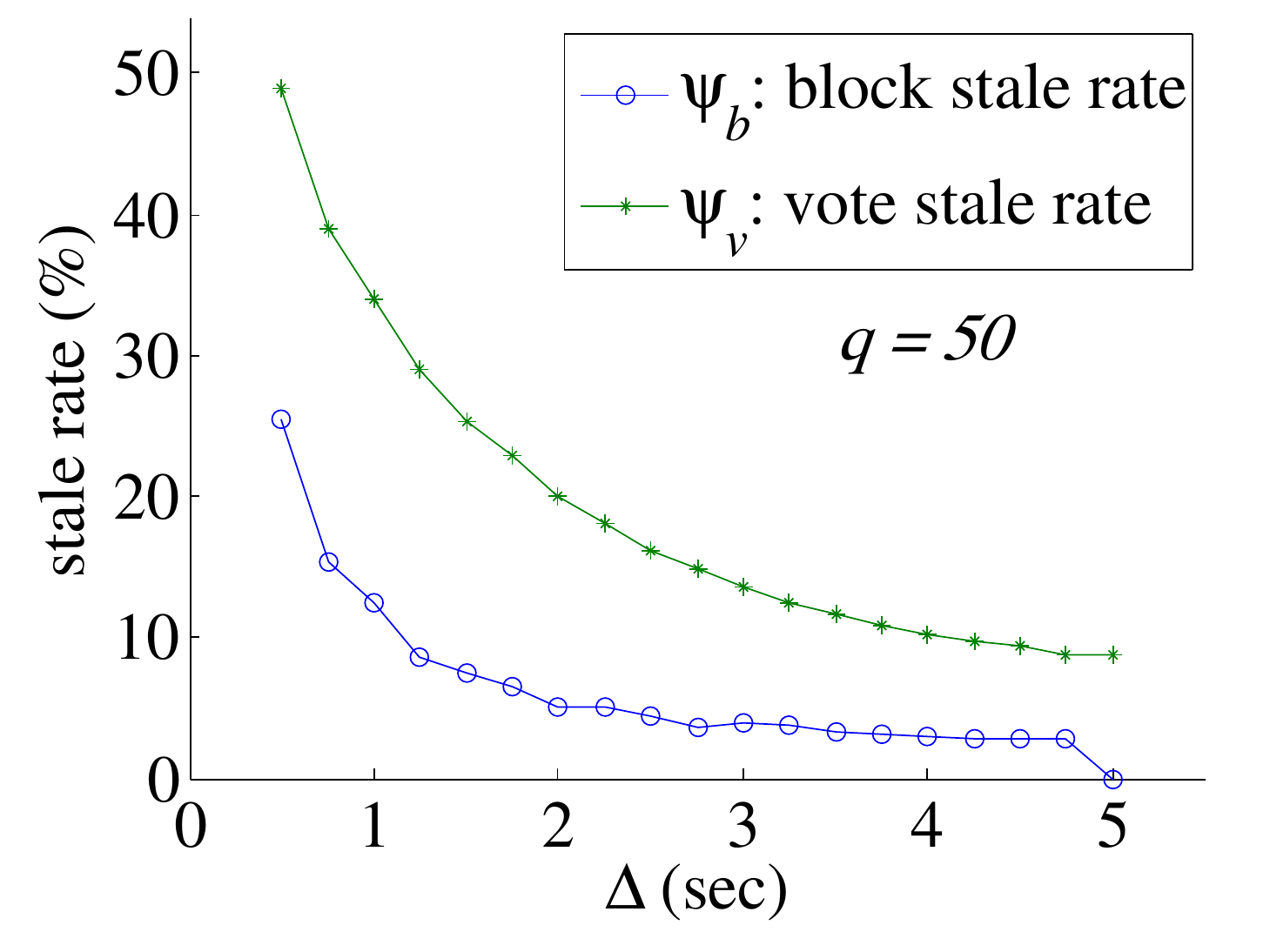}  
      \label{fig:sub-first}
    \end{subfigure}
    \begin{subfigure}{.24\textwidth}
      \centering
      \includegraphics[width=\linewidth]{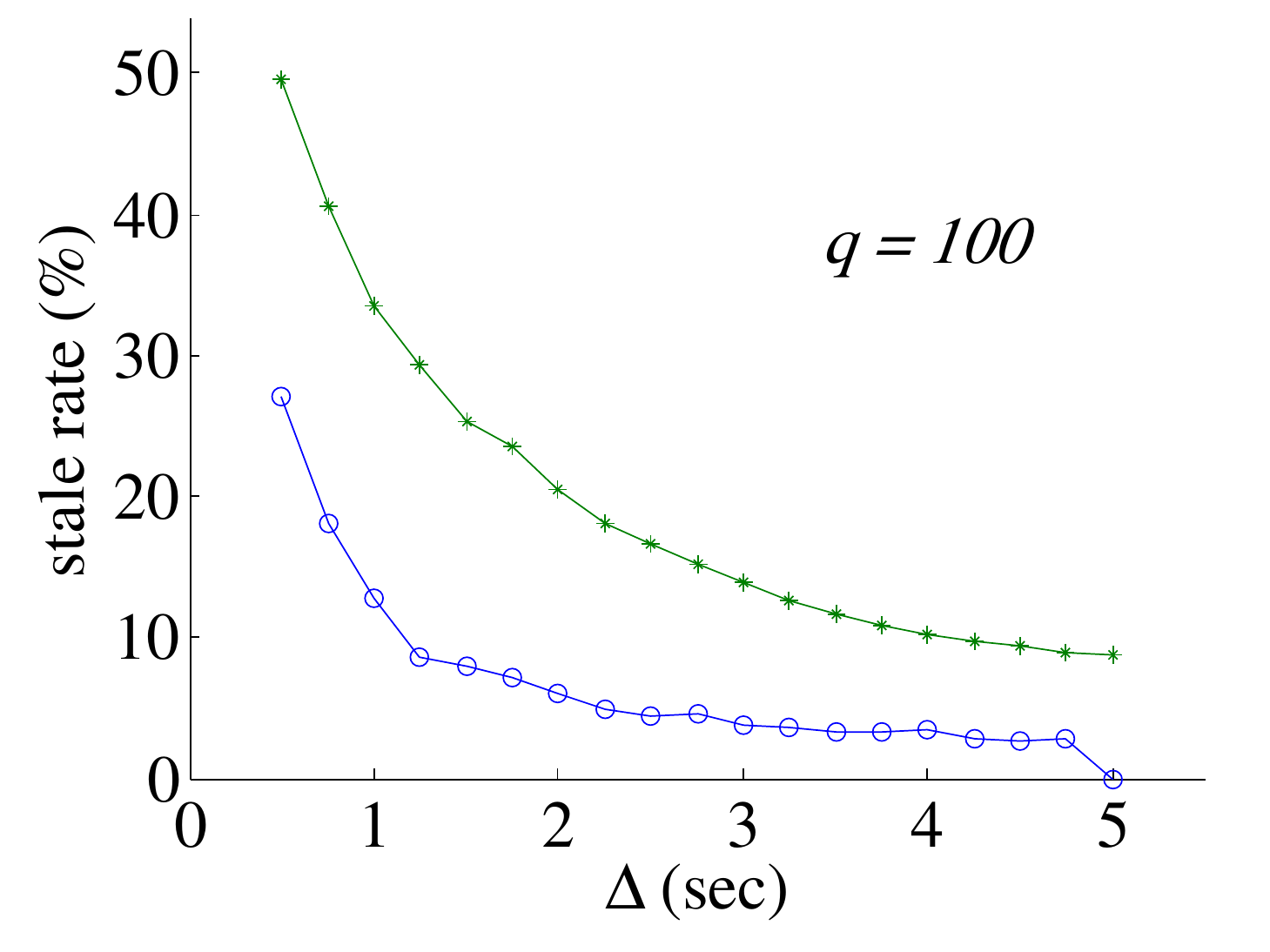}  
      \label{fig:sub-second}
    \end{subfigure}
    \begin{subfigure}{.24\textwidth}
      \centering
      \includegraphics[width=\linewidth]{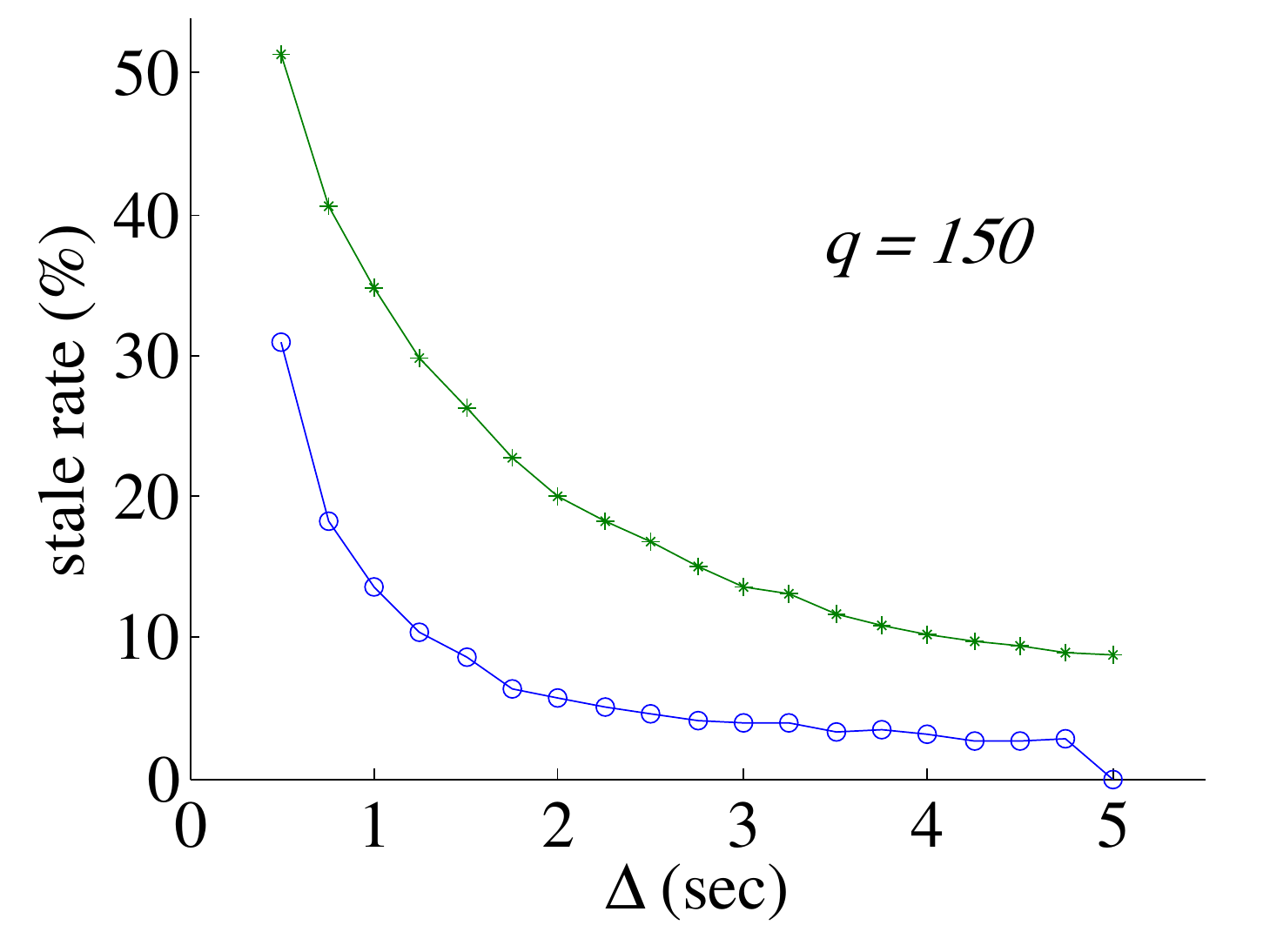}  
      \label{fig:sub-third}
    \end{subfigure}
    \begin{subfigure}{.24\textwidth}
      \centering
      \includegraphics[width=\linewidth]{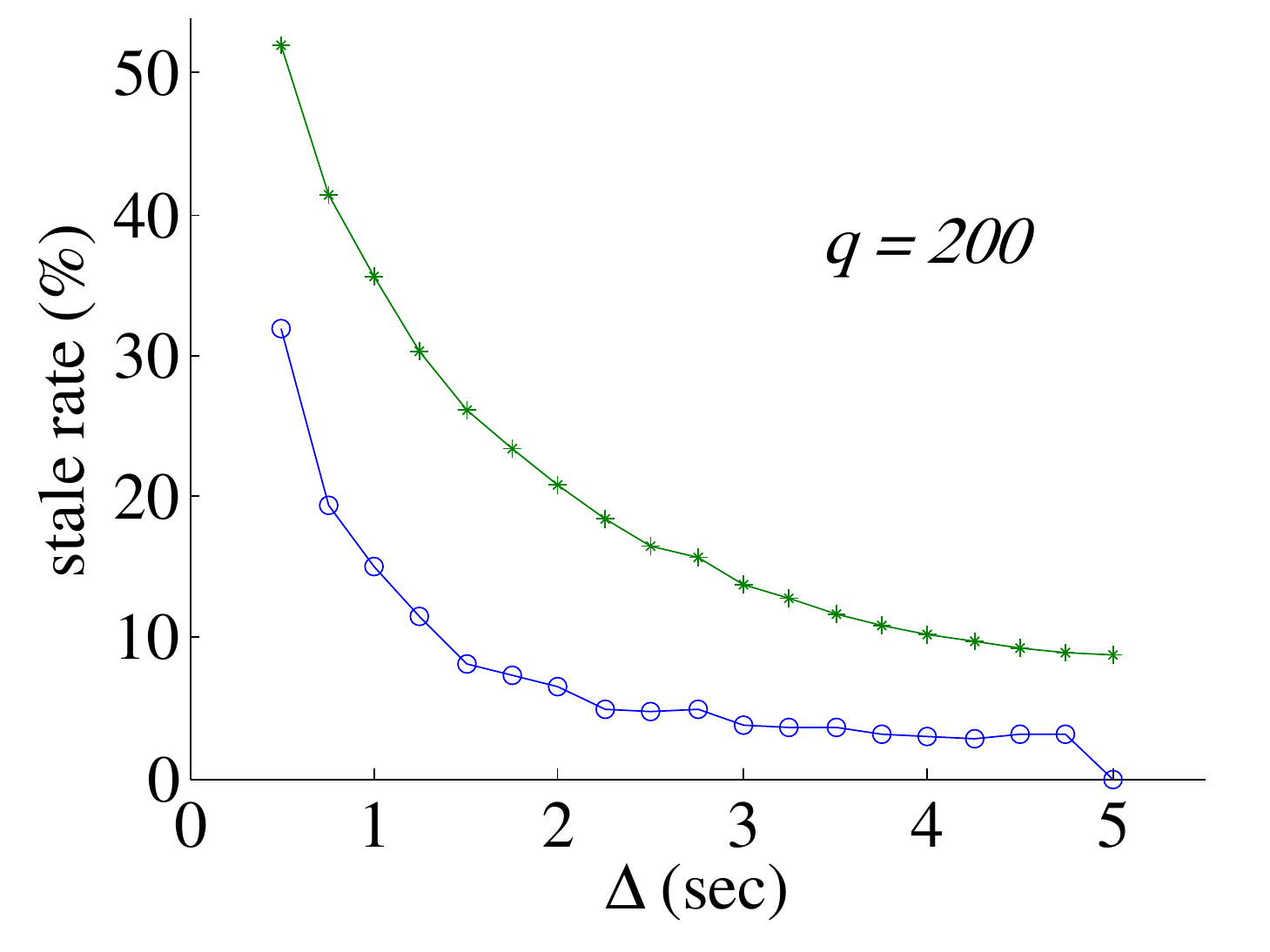}  
      \label{fig:sub-fourth}
    \end{subfigure}
    \caption{Simulated stale rates.}
    \label{fig:sim}
\end{figure}

The obtained results are presented in \autoref{fig:sim}. We see that if
$\Delta$ is too short, then protocol performance decreases as stale rates for both
blocks and votes are high. Besides influencing performance (fewer blocks
denote lower throughput), high stale rates degrade security as it takes
longer to commit a block (see \autoref{sec:analysis:safety}).  However, when $\Delta$ is slightly increased, the stale rates improve significantly.  We
found that  choosing $\Delta$ between 3-4 seconds may be good for the considered network
conditions.  We confirm our results in \autoref{sec:impl:eval}.  Moreover, with
increasing $q$, we see only a mild increase
in the stale rates.

\section{Implementation and Evaluation}
\label{sec:implementation}
\myparagraph{Implementation}
To test and evaluate \name we fully implemented it.
Our implementation is based mainly on Python.
For implementing the p2p networking stack, we use the Twisted framework while we
deployed Ed25519 signatures~\cite{bernstein2012high}.
The advantage of doing so is that is produces short public keys and signatures
(32 and 64 bytes, respectively).  In our implementation, we have encoded a vote in 80
bytes, and votes are included within their respective blocks.  Although the
number of votes is linear, even for large values of $q$ the introduced bandwidth
overhead should be acceptable. For instance, 1000 support
votes with our encoding would consume 80kB, which constitutes only 8\% of a 1MB
block or 4\% of a 2MB block.

\myparagraph{Evaluation}
\label{sec:impl:eval}
Equipped with our implementation we conducted a series of experiments in a
real-world setting. We set up a testbed consisting of physical machines
geographically distributed among 15 different locations from five continents. We
used these machines to run 100 \name nodes in total.  We used a simple
p2p flooding where every node peered with up to five
peers. We first investigated the
throughput of the network and we obtained an average end-to-end throughput of
around 10 Mbps. To introduce a conservative setting and better express
real-world heterogeneous network conditions we did not optimize our network by
techniques like efficient message dissemination or geographical peer selection.

In our setting, all our 100 nodes are elected as voters (i.e., $q=100$)
while only one node per round is elected as leader.  From the performance point
of view, such a setting would be for instance equivalent to the setting when
there are 2000 nodes in the system and $q/n=5\%$. During the execution of our
experiments, we noticed an imbalance between the durations of the
protocol steps. Namely, votes constitute only a small part of the blocks
that actually carry transactions, so if the nodes spend too much time waiting for votes then this would
not allow us to saturate the network. To maximize the throughput we
introduced separate waiting times for votes and blocks -- i.e., $\Delta_1$ and
$\Delta_2$ respectively, where $\Delta_1<\Delta_2$, instead of a single value $\Delta$ for both.

We have run a series of experiments with different $\Delta_1$, $\Delta_2$, and
block size
parameters and our results are presented in \autoref{tab:perf}.  Namely, we
measure the following three performance indicators: the block and vote stale rates
$\psi_b$ and $\psi_v$ as defined in \autoref{sec:analysis:sim}, and the \textit{goodput} -- i.e., the number of kilobytes
available for potential applications per second. 
The block size introduces a natural trade-off between the latency
and goodput. We observe that if we increase the block size from 1MB (which is the default for Bitcoin) to 2MB, the block stale rates remain similar and the goodput is roughly doubled. However, if we increase the block size to 4MB then the stale rates are so high for small $\Delta_1$ that throughput is not meaningfully higher than for 2MB blocks. As depicted, \name offers a goodput between 200 and 600 KB/s,
with a round latency between 5 and 6.5 seconds, respectively. In Bitcoin, where the size of a 2-in-2-out transaction is around 450 bytes, this would roughly correspond to between 450 and 1300 transactions per second.
In particular, we found
$\Delta_1=1.5s$ and $\Delta_2=4.0s$ as a promising configuration for 2MB blocks.

Further, we investigated the computation costs incurred by \name nodes.
We used a
single core of Intel i7 (3.5 GHz) CPU to compute vote and block processing
times.
On average, it takes only 53.25 $\mu$s to create a signed vote, 17.22 ms to
create a new block with 100 votes, and 10.08 ms to validate a received block
including 100 votes.

\begin{table}[t!]
\begin{center}
\setlength{\tabcolsep}{4pt}
    \caption{Performance results for different parameters.}
    \label{tab:perf}
    \footnotesize
\begin{tabular}{cc|rrc|rrc|rrc}
    &&\multicolumn{3}{c|}{1MB Blocks}&\multicolumn{3}{c|}{2MB Blocks}&\multicolumn{3}{c}{4MB Blocks}\\
    $\Delta_1$ & $\Delta_2$ & $\psi_b$ & $\psi_v$ & Gput. & $\psi_b$ & $\psi_v$ & Gput. & $\psi_b$ & $\psi_v$ & Gput.\\
    $s$ & $s$ & \%   & \%   & KB/s & \%   & \%   & KB/s & \%   & \%  & KB/s  \\\hline
    1.0 & 4.0 & 9.5 & 19.6 & 179 & 14.3 & 33.5 & 343 & 49.0 & 41.1 & 204 \\
    3.0 & 3.0 & 0   & 5.8  & 167 & 2.5  & 22.3 & 325 & 16.5 & 26.7 & 550 \\
    2.0 & 3.0 & 0   & 6.1  & 200 & 0    & 23.0 & 400 & 37.1 & 34.4 & 336 \\
    2.5 & 3.0 & 0   & 4.1  & 182 & 0    & 20.6 & 364 & 31.2 & 32.1 & 371 \\
    2.0 & 4.0 & 0   & 2.0  & 167 & 0    & 5.4  & 333 & 31.6 & 38.6 & 412 \\
    1.5 & 4.0 & 1.4 & 1.9  & 179 & 0    & 5.3  & 364 & 48.4 & 39.0 & 267 \\
    3.5 & 3.0 & 0   & 1.4  & 154 & 0    & 2.9  & 308 & 2.0  & 7.1  & 603 \\
    4.0 & 3.0 & 0   & 1.5  & 143 & 0    & 2.2  & 286 & 0    & 5.6  & 571 
\end{tabular}
\end{center}
\end{table}

\section{Comparison to Algorand}
\label{sec:algorand}
In this section we provide a detailed comparison to Algorand \cite{gilad2017algorand}, which is a closely-related
PoS protocol. Algorand shares several similarities with \name: 1) the protocol operates as a sequence of rounds, 2) leaders and committee members are selected in each round based on a random beacon that changes between rounds, 3) the likelihood that a node is selected as a leader or committee member is proportional to its number of stake units, 4) committee members vote on blocks proposed by the leaders, and 5) blocks are committed if they receive a sufficient number of votes. 

Despite these similarities, there are two main differences between Algorand and \name. The first is that instead of the cryptographic sampling procedure of \autoref{alg:election}, Algorand elects leaders and committee members in every round by running a VRF~\cite{micali1999verifiable} over the round's random beacon. As a result, the number of leaders and committee members in each round is random instead of fixed, and the added variance makes block commitment less secure. Second, instead of the voting and sequential hypothesis testing procedures of \autoref{alg:voting}~and~\ref{alg:commit}, Algorand uses a bespoke Byzantine agreement protocol (BA$\star$) to commit blocks. However, BA$\star$ includes many steps during which no transactions are added to the blockchain, and the security analysis for BA$\star$ is more restricted than for \name. There are other differences -- e.g., rewards and fairness are not discussed in \cite{gilad2017algorand} -- but due to space restrictions we only elaborate on the two main differences. A summary of the technical details of Algorand can be found in App.~\ref{sec:algorand technical}.

We first investigate what the impact would be if \autoref{alg:election} of \name were replaced by Algorand's committee selection procedure, so that $l$ and $q$ are no longer fixed. 
The main consequence is that the additional variance due to the committee sizes makes it harder to make block commitment decisions using hypothesis testing. 
This difference in terms of variance is made explicit in \autoref{tab:means_and_variances} of App.~\ref{app:vrf}. In particular, the quantity of supporting stake per round $X$ was previously hypergeometric (drawing $q$ samples without replacement from a population of size $n$ of which $u$ support the user's branch), whereas in the setting with VRFs it is binomial with a considerably larger population size (drawing $n$ samples such that each is in the committee with probability $q/n$ and supportive of the user's branch with probability $u/n$). The variance is typically around 3-4 times higher in the new setting if the adversary controls one third of the stake (i.e., $u/n = \frac{2}{3}$). 

As we can see from \autoref{fig:timetocommit}, the higher variance translates into block commitment times that are also roughly 3-4 times higher. For this experiment, we calculated numerically for different values of the average supporting stake per round $\bar{s}$ how many rounds it would take to commit in a setting where $p^* = 10^{-64}$, $n=1500$, $u=1000$, $q=150$, and $\gamma=0.99$ (as discussed in \autoref{sec:analysis:safety}). If $X$ is binomially distributed, then $T = \sum_{i=n+1}^m X_i$ is also binomially distributed, so we do not need to use the Cram\'er-Chernoff method to bound the probability $\P(T\geq t)$ for the VRF setting because we can calculate it directly. However, the looseness of the Cram\'er-Chernoff bound is clearly offset by the VRF method's higher variance. It can also be seen that if $\bar{s}$ is high enough, then \name can commit the blocks very quickly. For example, if $\bar{s}$ is at least equal to $98\%$, then \name commits within 3 rounds despite the very high security requirement (i.e., a safety error probability per round of $10^{-64}$). As long as $\bar{s}$ is above $86\%$, \name commits within 10 rounds -- with 5.5 seconds per round (see \autoref{sec:implementation}), this takes 55 seconds and therefore less time than the (on average) hour for 6 confirmations in Bitcoin.

\begin{figure}[t!]
\centering
  \includegraphics[width=.5\linewidth]{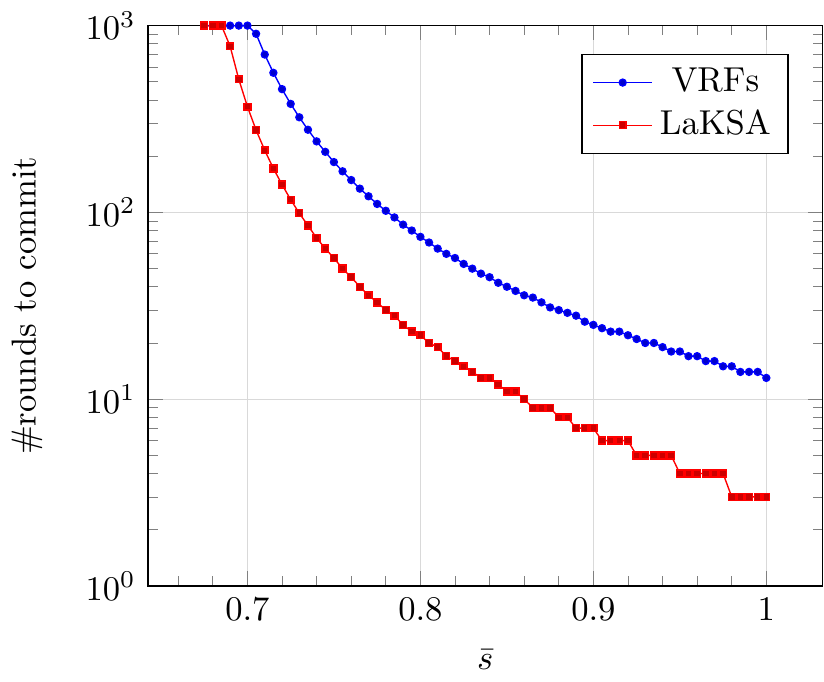}
\caption{Number of rounds needed before a block can be committed given an average supporting stake fraction per round $\bar{s}$ for fixed committee sizes (\name) and VRFs. It takes around 3-4 times longer to commit using VRFs.}
  \label{fig:timetocommit}
\end{figure}

Although the above demonstrates that VRF-based committee selection increases the number of rounds needed to commit a block using hypothesis testing, the value proposition of Algorand's BA$\star$ algorithm is that it commits blocks after a single round. However, there are several important caveats. The first is that even in the best-case scenario, the two phases of BA$\star$ take 2 steps each, so a round in Algorand takes at least $\lambda_{\textsc{priority}} + 4\cdot \lambda_{\textsc{step}} = $ 85 seconds (see also App.~\ref{sec:algorand technical}), which is the same as 15 rounds of \name. No transactions are added during the BA$\star$ steps, whereas in \name transaction-containing blocks are confirmed by other transaction-containing blocks. The second is that the security analysis relies on larger committee sizes (10,000 stake units in the final round) and a weaker adversary (who only controls $20\%$ of the stake) than \name. Finally, each user in \name is able to set her security threshold $p^*$ individually depending on her risk tolerance, whereas Algorand has a fixed set of security parameters that determine block commitment.

\section{Related Work}
\label{sec:related}
Besides Algorand, Bitcoin's NC~\cite{nakamoto2008bitcoin} has inspired a variety of alternative protocols that aim to build on its core strengths while mitigating its weaknesses, i.e., wastefulness, a slow and unclear commitment process, low transaction throughput, and a tendency to centralization. 
In this section, we compare \name to some of its most prominent alternatives, and highlight instances where they copy drawbacks from NC or introduce new ones. An overview is presented in \autoref{tab:related}.

\begin{table}[htp]
\begin{center}
\caption{Comparison of different blockchain protocols. HotStuff is a pure BFT design that can also be applied in a PoS setting.}
\label{tab:related}
\begin{tabular}{cccc}
 		 & Node      & Block & $\#$Leaders/Voters \\[-0.05cm]
Protocol & Selection & Commitment & Per Round \\ \toprule
NC \cite{nakamoto2008bitcoin} & PoW & Chain-Based & $\Theta(1)$ \\
Peercoin \cite{king2012ppcoin} & PoW/PoS & Chain-Based & $\Theta(1)$ \\
Tendermint \cite{kwon2014tendermint} & PoS & BFT Voting & $\Theta(n)$ \\
Casper FFG \cite{buterin2017casper} & PoS & BFT Voting & $\Theta(n)$ \\
HotStuff \cite{yin2018hotstuff} & --- & BFT Voting & $\Theta(n)$ \\
Algorand \cite{gilad2017algorand} & PoS & BFT Voting & $\Theta(q)$ \\
\name & PoS & Chain-Based & $\Theta(q)$ \\[-0.05cm] \bottomrule
\end{tabular}
\end{center}
\end{table}

The first PoS protocols that tried to extend or modify NC by
taking stake into consideration resulted in hybrid PoS-PoW
systems~\cite{king2012ppcoin,bentov2014proof}. The first of these, PPCoin (later renamed to Peercoin) \cite{king2012ppcoin}, extends Bitcoin by also granting the ability to propose blocks to nodes who hold on to tokens instead of spending them (i.e., increasing their \emph{coin age}). Bentov et al.\ instead propose a protocol~\cite{bentov2016cryptocurrencies} in which
new blocks generate randomness that is used to coordinate the extension of the blockchain
in the next several rounds. Other PoS
protocols~\cite{fan2017scalable,wang2019proof} try to emulate
the mining process of NC by using unique digital
signatures.
All of these systems mitigate Bitcoin's energy waste; however, they share multiple other drawbacks with NC,
such as an unclear commitment process and a tendency to centralization via high
reward variance.

To address the low throughput and unclear commitment process in NC, several other approaches replace NC's longest-chain rule by more established BFT consensus algorithms. The first approach to do so is Tendermint~\cite{kwon2014tendermint}, in which leaders are elected proportionally to their stake using a round-robin procedure.  The other nodes run a protocol that is based on Practical Byzantine Fault Tolerance (PBFT) \cite{castro1999practical} to agree on whether to commit the proposed block. Like most BFT protocols, PBFT works in the presence of $f$ adversarial nodes as long as $n \geq 3f + 1$. After a block is proposed, all nodes vote across two phases (\emph{prepare} and \emph{commit}), and a block is committed if at least $2f+1 \approx \frac{2}{3}n$ nodes vote to approve the block in both phases.  To avoid
attacks on leaders, which are known in advance, Tendermint proposes a lightweight
network-level anonymity solution \cite{kwon2014tendermint}. 

Casper FFG~\cite{buterin2017casper} is another PBFT-based PoS protocol, and functions as a finality-providing overlay for PoW or PoS blockchains. One of its main observations is
that the two phases of PBFT can be encoded in a blockchain as vote messages for two subsequent blocks.
Casper FFG is designed for an open setting with a dynamic set of nodes, and includes
incentive-based protection against equivocation -- i.e., misbehaving nodes risk losing
their deposits. In both Tendermint and Casper FFG, all nodes send their vote message to all other nodes in each voting round -- as such, when $n$ grows large the communication complexity becomes a bottleneck.
 HotStuff~\cite{yin2018hotstuff} considerably reduces the message complexity of  protocols such as Tendermint and Casper FFG through the use of threshold signatures. However, in HotStuff all participating nodes still vote in every round of the protocol.

Another approach to reduce the message complexity of BFT voting in each round is to draw a committee from the total set of nodes instead of requiring that all nodes vote. However, the safety properties of PBFT-based approaches do not straightforwardly carry over to this setting. The most obvious generalization of PBFT to random committees is to require that a block receives more than $\frac{2}{3}q$ votes instead of $\approx \frac{2}{3}n$ votes across two rounds for it to be committed, but this can lead to high safety fault probabilities -- see also \autoref{sec:committee_bft}.
As discussed in \autoref{sec:algorand}, Algorand \cite{gilad2017algorand} introduces a bespoke BFT algorithm called BA$\star$, which addresses the above problem by 1) requiring a higher fraction than $\frac{2}{3}$ of supporting votes to commit ($0.74q$ in the final round), 2) requiring large committee sizes, and 3) assuming limited adversarial strength.

\subsection{Message Complexity per Round}
\label{sec:complexity}

NC has a message complexity of $\Theta(n)$ in each round, as leaders are elected
without requiring any messages, and a leader sends her block to the other $n-1$
nodes. PBFT-like protocols such as Tendermint and Casper FFG have a message
complexity of $\Theta(n^2)$, because all nodes except the leader vote in each
round, and each vote is sent to every other node. In HotStuff, voters only send
their votes to the leader, and the leader then sends the block including an
aggregated signature to the nodes, resulting in a message complexity of
$\Theta(n)$. Algorand has a message complexity of $\Theta(qn)$, as each
committee member sends a message to the other $n-1$ nodes. Since the leader is
not known -- which is inescapable in the first phase, as the winners of the
VRF-based election are unknowable by design -- this cannot be reduced using the
approach of HotStuff. In the implementation of \name as presented in
\autoref{alg:voting}, the communication complexity is also $\Theta(qn)$
since each vote is sent to $n-1$ nodes. However, since the leader is known in
each round, this can be reduced to $\Theta(n)$ using the approach of HotStuff.
One adverse effect is that this makes the creation of virtual blocks in the case
of offline or malicious leaders more complicated -- essentially, voters
must also send their votes to the leaders of later rounds. 

\subsection{Other Related Work}
\label{sec:misc related}
Brown-Cohen et al.~\cite{brown2018formal} analyze longest-chain PoS protocols
and show their fundamental limitations in preventing PoS-specific attacks. These results do not directly apply to \name, since chain selection in \name fully depends on votes (and not leader-driven chain length) and a GHOST-like rule which cannot be
expressed in the framework of \cite{brown2018formal}; both aspects are mentioned by the authors of \cite{brown2018formal} as limitations.
The Ouroboros family of protocols \cite{kiayias2017ouroboros,david2018ouroboros,badertscher2018ouroboros,kiayias2018ouroboros} also uses a PoS approach with committee voting. Ouroboros shares the limitations of longest-chain PoS protocols as reported by Brown-Cohen et al.~\cite{brown2018formal}.
Moreover, the protocol leaves rewards and incentives as future work. Cardano, a cryptocurrency built upon Ouroboros, encourages users to either join or create stake pools~\cite{cardano-staking} (see also \cite{brunjes2018reward}), thus encouraging centralization by design.
Under the Ethereum 2.0 roadmap, Ethereum's legacy PoW chain will be phased out in favor of a novel PoS chain that features Casper FFG and sharding. The \textit{beacon chain}, which will eventually coordinate the shard chains, was launched in December 2020. The beacon chain's block proposal mechanism \cite{buterin2020combining} shares some similarities with \name, e.g., it is also chain-based, with \textit{attestations} taking the place of votes in the fork-choice rule. In this approach, time is divided into epochs that consist of 32 rounds -- in each epoch, the set of active participants is pseudorandomly shuffled and divided among the epoch's slots to ensure that each node votes once per epoch.
Finally, several recent surveys \cite{cachin2017blockchain,DBLP:journals/corr/abs-1711-03936,natoli2019deconstructing} present an overview of recently proposed blockchain protocols.

\section{Conclusions}
\label{sec:conclusions}
In this work we have proposed \name, a novel PoS consensus protocol dedicated to
cryptocurrencies.  Surprisingly, through its simple construction \name provides a robust, scalable, and
secure consensus mechanism.  Our scheme extends the notion of probabilistic safety -- in particular, clients base block commitment decisions on the probability of the block being reverted given the total observed support for it.
These decisions are made more precise thanks to a lightweight committee voting scheme
that allows large numbers of nodes to participate and 
express their beliefs about the blockchain. 

In this work we have presented the core concept behind \name and its properties.  In
the future, we plan to extend and analyze the system in a more dynamic setting
and with adaptive adversaries. In particular, we believe that the ideas that are present
in recent protocols~\cite{gilad2017algorand,daian2017snow,david2018ouroboros}
can be successfully applied in \name, enhancing the protocol further.  Another
interesting research problem is to find an efficient election protocol combining
the advantages of the proposed schemes (i.e., ``secret'' but deterministic
election).  We also plan to study further economic aspects of the reward scheme
and their influence on the security of the system.

\bibliographystyle{abbrv}
\bibliography{ref}

\appendices
\label{sec:appendix}
\section{Cryptographic Sampling Proof}
\label{app:prf-sampling}
 We show that no probabilistic polynomial-time (PPT) adversary can distinguish
 outputs of our cryptographic sampling (see \autoref{sec:details:election}) from
 truly random sampling.

\begin{definition}[Pseudorandom sampling]
$PRF$ is a pseudorandom sampling and mapping $\{0,1\}^n \times \{0,1\}^s \to
    \{0,1\}^n$ if it is collision-resistant and satisfies the following requirements:
\begin{inparaenum}
  \item For any $k \in \{0,1\}^s$, $PRF$ is a bijection from $\{0,1\}^n$ to $\{0,1\}^n$.
  \item For any $k \in \{0,1\}^s$, there is an efficient algorithm to evaluate $PRF_k(x)$.
  \item For any PPT distinguisher $\mathcal{D}$:
      \begin{equation}\label{eq:permutation}
        |Pr(\mathcal{D}^{PRF_k(\cdot)}(1^n) = 1)-Pr(\mathcal{D}^{f_n}(1^n) = 1)|< negl(s),
      \end{equation}
      where $k \gets \{0,1\}^n$ is chosen uniformly at random and $f_n$ is
chosen uniformly at random from the set of permutations on $n$-bit strings.
\end{inparaenum}
\end{definition}

\begin{corollary}
If the output of $PRF_r(\cdot)$ is indistinguishable from the uniform distribution,
    then the result of $PRF_{r}(\cdot)\% N$ for any $N$ is
    indistinguishable from the uniform distribution.
\end{corollary}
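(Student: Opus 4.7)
The plan is to prove the corollary by a standard reduction argument combined with a statistical-distance bound. The overall strategy is to show that applying a deterministic, efficiently computable function (here, reduction modulo $N$) to a pseudorandom distribution preserves pseudorandomness, and then to argue that the image of the uniform distribution on $n$-bit strings under $\bmod\ N$ is statistically close to the uniform distribution on $\{0, \ldots, N-1\}$.

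First, I would set up the reduction. Suppose for contradiction that there exists a PPT distinguisher $\mathcal{D}$ that distinguishes the distribution of $PRF_{r}(\cdot) \bmod N$ from the uniform distribution on $\{0, \ldots, N-1\}$ with non-negligible advantage $\varepsilon$. I would then construct a PPT distinguisher $\mathcal{D}'$ against the underlying PRF as follows: $\mathcal{D}'$ queries its oracle to obtain an $n$-bit string $y$, computes $y \bmod N$, feeds the result to $\mathcal{D}$, and outputs whatever $\mathcal{D}$ outputs. Since modular reduction is efficiently computable, $\mathcal{D}'$ is PPT whenever $\mathcal{D}$ is.

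Second, I would analyze the two cases. When $\mathcal{D}'$'s oracle is $PRF_{r}$, the view presented to $\mathcal{D}$ is exactly the distribution of $PRF_{r}(\cdot) \bmod N$. When the oracle is the truly random permutation $f_n$ from \eqref{eq:permutation}, the view is $f_n(\cdot) \bmod N$, which on distinct queries is statistically close to $U_n \bmod N$ (by the standard PRP/PRF switching argument, the gap is at most $\binom{t}{2}/2^n$ for $t$ queries). The key quantitative step is the claim that $U_n \bmod N$ is statistically close to uniform on $\{0, \ldots, N-1\}$: writing $2^n = qN + r$ with $0 \le r < N$, the total-variation distance is $2r(N - r)/(2^n N) \le N/2^{n+1}$, which is negligible in $n$ provided $N$ is polynomially bounded. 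By the triangle inequality, $\mathcal{D}'$'s advantage is at least $\varepsilon - N/2^{n+1} - \binom{t}{2}/2^n$, which remains non-negligible and contradicts \eqref{eq:permutation}.

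The main obstacle is the quantitative bookkeeping in the second step: I need $N$ to be small enough relative to $2^n$ that $N/2^n$ is negligible. In the $\name$ setting, $N$ is bounded by the total number of stake units, which is polynomial in the security parameter, so the bound is comfortable. If one wished to handle $N$ approaching $2^n$, the clean fix would be to replace the naive $\bmod$ reduction with rejection sampling on an interval of length $N \lfloor 2^n / N \rfloor$, restoring exact uniformity at the cost of an expected constant number of extra PRF invocations.
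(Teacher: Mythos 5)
Your proposal is correct, and it is considerably more careful than what the paper provides: the paper states this corollary without proof and later simply asserts that ``$PRF_{r}(i\|role)$ is pseudorandom, so $PRF_{r}(i\|role)\ \%\ N$ is also pseudorandom.'' Your two-part decomposition supplies exactly the content that assertion is silently relying on. The reduction step (efficient deterministic post-processing preserves computational indistinguishability) is the standard argument the authors presumably have in mind. The second step --- bounding the statistical distance between $U_n \bmod N$ and the uniform distribution on $\{0,\ldots,N-1\}$ by roughly $N/2^{n+1}$ --- is genuinely needed and is absent from the paper: as you correctly observe, the corollary as literally stated (``for any $N$'') is false, since for $N$ comparable to $2^n$ the modular bias is a constant. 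The statement only holds under the implicit assumption that $N$ (here $\mathrm{Len}(tmp)$, bounded by the total stake $n$) is negligible relative to $2^n$, which is reasonable in the \name setting but should be said. Your attention to the PRP-versus-PRF gap is also apt, since the paper's Definition~1 actually requires $PRF$ to be a bijection, so the ideal object is a random permutation and the $\binom{t}{2}/2^n$ switching term belongs in the accounting. Two minor notes: the quantity $2r(N-r)/(2^nN)$ you write is the $L_1$ distance rather than the total-variation distance (which is half of it), though your upper bound remains valid for the distinguishing advantage; and your suggested rejection-sampling fix is the standard remedy but is not what \autoref{alg:election} implements, so it is best framed, as you do, as an alternative rather than a repair of the algorithm as written.
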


In particular, a pseudorandom sampling family is a collection of pseudorandom functions, where a specific sampling may be chosen using a key as a \emph{salt}.
\begin{lemma}
We say that $PRF$ is an unpredictable pseudorandom sampling, if no
PPT adversary can distinguish the unit-pair from a uniform random distribution.
\end{lemma}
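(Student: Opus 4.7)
The plan is to reduce the lemma to the pseudorandomness property of $PRF$ asserted in the earlier Definition (and stated formally via \eqref{eq:permutation}). Suppose for contradiction that some PPT adversary $\mathcal{A}$ has non-negligible advantage $\epsilon(s)$ in distinguishing the (unit, public key) pairs emitted by the $Sample$ routine in Alg.~\ref{alg:election} from the corresponding pairs obtained by uniform sampling without replacement from the stake array. I would construct a PRF distinguisher $\mathcal{D}$ whose advantage exceeds $\epsilon(s)$ by at most a negligible additive term, directly contradicting the assumption on $PRF$.

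First, I would have $\mathcal{D}$ simulate the entire execution of $Sample(\mathit{size},r,\mathit{role})$, replacing every call $PRF_r(i\|\mathit{role})$ by a query to its own oracle $\mathcal{O}$, which is either the keyed $PRF_r(\cdot)$ or a truly random function on $n$-bit strings. $\mathcal{D}$ forwards the resulting list of public keys, together with the indexing information that identifies each sampled stake unit, to $\mathcal{A}$ and echoes $\mathcal{A}$'s output bit. In the real world, $\mathcal{D}$'s simulation is perfect, so $\mathcal{A}$'s view is identically the output distribution of the real sampling procedure. In the ideal world, I need to show that $\mathcal{A}$'s input is statistically close to a uniform random sample without replacement.

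The main obstacle is this bridge step in the ideal world: the algorithm reduces each $n$-bit oracle output modulo the current length $\mathit{Len}(tmp)$, which introduces a small bias. Invoking the earlier Corollary on uniform modular reductions, each such reduction has statistical distance at most $\mathit{Len}(tmp)/2^n \leq n/2^n$ from the uniform distribution on $\{0,\ldots,\mathit{Len}(tmp)-1\}$. Because the inputs $i\|\mathit{role}$ are pairwise distinct across iterations, the oracle responses are mutually independent when $\mathcal{O}$ is truly random, and a hybrid argument over the $\mathit{size}$ iterations bounds the total statistical distance to genuine uniform sampling without replacement by $\mathit{size}\cdot n/2^n$, which is negligible in the security parameter $s$.

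Combining the two cases, $\mathcal{D}$ distinguishes its oracle with advantage at least $\epsilon(s) - \mathit{size}\cdot n/2^n$, which remains non-negligible whenever $\epsilon(s)$ is. This contradicts \eqref{eq:permutation} and proves the lemma. Note that collision-resistance of $PRF$ ensures that the distinct inputs used in successive iterations yield distinct queries, so no auxiliary ``bad event'' conditioning is required; the only quantitative slack in the reduction is the modular-reduction bias, which we have already shown to be negligible.
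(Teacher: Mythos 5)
Your proposal is correct and rests on the same foundation as the paper's proof --- a reduction to the PRF indistinguishability assumption in Eq.~\eqref{eq:permutation} via contradiction --- but the two arguments are organized differently. The paper runs a hybrid argument over the $S$ output positions: it defines intermediate distributions $\mathcal{X}_i$ in which the first $i-1$ PRF outputs are replaced by uniform values, and extracts a PRF distinguisher with advantage $\epsilon/S$ from the adversary's ability to tell some adjacent pair apart. You instead use the real-or-random oracle formulation of PRF security directly: the distinguisher simulates all of $Sample$ against its oracle in one shot, so no position-by-position hybrid (and no $\epsilon/S$ loss) is needed at the PRF level. The more substantive difference is your treatment of the modular reduction. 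The paper disposes of it with a corollary asserting that $PRF_r(\cdot)\,\%\,N$ is ``indistinguishable from the uniform distribution,'' which as stated is slightly too strong --- when $N$ does not divide $2^n$, the reduced value is only \emph{statistically close} to uniform on $\{0,\ldots,N-1\}$, not identical to it. You make this explicit, bounding the per-draw bias by $\mathit{Len}(tmp)/2^n$ and summing over the $\mathit{size}$ iterations to get a negligible additive slack in the reduction. That is the more careful accounting, and it is the piece the paper's argument glosses over. One small nit: distinctness of the inputs $i\|\mathit{role}$ is immediate from their construction, so invoking collision resistance of the PRF for that purpose is unnecessary.
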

\begin{proof}
For all PPT distinguishers $\mathcal{D}$:
      \begin{equation*}
        |Pr(\mathcal{D}^{PRF_k(\cdot)}(1^n) = 1)-Pr(\mathcal{D}^{f_n}(1^n) = 1)|< negl(s),
      \end{equation*}

 In other words, a PRF is any pseudorandom hash function that can be used to map data of arbitrary size to data of fixed size, and the above equation implies that if the PRF is secure, then its output is indistinguishable from random output.
We define the distributions $\mathcal{X}$  and $\mathcal{Y}$ as follows.
  $\mathcal{X}$ is the distribution
    $o_i^1, o_i^2,\cdots, o_i^S$,
  where $o_i^j \gets PRF_{r_s}\big( i \|role \big)$, and  $\mathcal{Y}$ is the uniform distribution.
Then the two distributions $\mathcal{X}$ and $\mathcal{Y}$ are computationally
indistinguishable.
Below, we prove it by contradiction.

We assume that $\mathcal{D}$ is a PPT adversary who can distinguish
$\mathcal{X}$ from $\mathcal{Y}$ with non-negligible advantage. For $1\leq i
\leq S+1$, $S$ is the number of stake units, we introduce intermediate distributions
$\mathcal{X}_i$ that are given by
  $\overline{h}^1 ,\cdots , \overline{h}^{i-1} , {h}^i , \cdots , {h}^S$, 
where ${h}^i$ is as above and $\overline{h}^i$ is uniformly chosen from $\mathbb{Z}_q$. Hence we obtain $\mathcal{X}_1=\mathcal{X}$ and $\mathcal{X}_{S+1}=\mathcal{Y}$.
By assumption, $\mathcal{D}$ can distinguish $\mathcal{X}_1$ from $\mathcal{X}_{S+1}=\mathcal{Y}$ with noticeable (or overwhelming) advantage $\epsilon$ and so, by a standard hybrid argument, there is some $i$ such that $\mathcal{D}$ can distinguish $\mathcal{X}_i$ from $\mathcal{X}_{i+1}$ with some noticeable advantage at least $\epsilon/S$.
It is then easy to see that $\mathcal{D}$ gives a  distinguisher.
That is, \autoref{eq:permutation} does not hold.
By assumption, no such distinguisher exists, and hence the lemma is proved.
\end{proof}

\begin{corollary}
If the pseudorandom sampling distribution
 $PRF_{r}(1\|role)\|\cdots\| PRF_{r}(Q \|role)$
 is indistinguishable from the uniform distribution, then the sampling  distribution
 $PRF_{r}(1\|role)\ \%\ Len(tmp_1)\| \cdots  \| PRF_{r}(Q \|role) \ \%\ Len(tmp_{Q})$
is indistinguishable from uniform, where $tmp_{i}$ for $i\in[1, Q]$.
\end{corollary}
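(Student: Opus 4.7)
The plan is a standard reduction: any PPT distinguisher for the modulo-reduced distribution yields a PPT distinguisher for the raw PRF-output distribution, contradicting the hypothesis. Concretely, suppose for contradiction that there is a PPT $\mathcal{D}'$ with non-negligible advantage $\epsilon'$ between the moded tuple $(PRF_{r}(1\|role)\,\%\,Len(tmp_1),\ldots,PRF_{r}(Q\|role)\,\%\,Len(tmp_Q))$ and uniform on $\prod_{i=1}^Q \{0,\ldots,Len(tmp_i)-1\}$. I would build $\mathcal{D}$ which, on input $(y_1,\ldots,y_Q) \in (\{0,1\}^n)^Q$, deterministically outputs $\mathcal{D}'(y_1\,\%\,Len(tmp_1),\ldots,y_Q\,\%\,Len(tmp_Q))$. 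The lengths $Len(tmp_i)$ are publicly known in advance (they decrease by one per iteration of $Sample$), so $\mathcal{D}$ runs in polynomial time.

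If $\mathcal{D}$'s input is distributed as the PRF-output concatenation, the tuple forwarded to $\mathcal{D}'$ is exactly the moded distribution named in the corollary. If the input is truly uniform on $(\{0,1\}^n)^Q$, one would like the forwarded tuple to be uniform on the product of reduced ranges, but $y \bmod N$ for uniform $y \in \{0,1\}^n$ is only statistically close to uniform on $\{0,\ldots,N-1\}$, with bias at most $N/2^n$ per coordinate. A union bound then bounds the total variation between the image of uniform and exact uniform on the product space by $Q \cdot \max_i Len(tmp_i)/2^n$.

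The main obstacle is thus not the reduction itself, which is routine, but handling this bias cleanly. For realistic parameters, namely $n \geq 256$ and committee sizes far below $2^{128}$, the above quantity is $\mathrm{negl}(n)$. Combining the two cases, $\mathcal{D}$'s distinguishing advantage is at least $\epsilon' - \mathrm{negl}(n)$, which remains non-negligible and contradicts the PRF assumption; hence no such $\mathcal{D}'$ exists and the corollary follows. As a strengthening worth remarking on, one could eliminate the bias entirely by replacing the plain modular reduction in $Sample$ with rejection sampling (discarding outputs that fall in the top $2^n \bmod N$ slice and redrawing a fresh PRF value), at the cost of an expected constant number of additional PRF evaluations; with that modification the reduction becomes lossless and the negligibility side condition on committee size versus PRF output length can be dropped entirely.
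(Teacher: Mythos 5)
Your reduction is correct and is essentially the argument the paper intends, but the paper never actually writes it out: the corollary is followed only by a one-sentence remark that ``$PRF_r(i\|role)$ is pseudorandom, so $PRF_r(i\|role)\,\%\,N$ is also pseudorandom,'' with no reduction and no accounting for the modular reduction. The substantive thing you add is the treatment of the modulo bias. The paper's preceding corollary even claims indistinguishability of $PRF_r(\cdot)\,\%\,N$ from uniform ``for any $N$,'' which is false as literally stated: for uniform $y\in\{0,1\}^n$ and, say, $N=\tfrac{3}{4}2^n$, the distribution of $y\bmod N$ is statistically far from uniform on $\{0,\ldots,N-1\}$, and this gap is information-theoretic, not computational. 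Your version correctly isolates the bias as at most $N/2^n$ per coordinate, union-bounds it over the $Q$ coordinates, and observes that the conclusion therefore needs the side condition $Q\cdot\max_i Len(tmp_i)/2^n = \mathrm{negl}(n)$ --- which holds for the intended parameters (stake counts far below $2^{256}$) but is not free. Your remark that rejection sampling in $Sample()$ would make the reduction lossless is also a genuine improvement over what the paper offers. One minor point worth making explicit if you write this up: the $Len(tmp_i)$ values are indeed public and fixed in advance ($Len(tmp_i)=n-i+1$ in the paper's notation for total stake $n$), so your wrapper distinguisher $\mathcal{D}$ is well-defined and PPT, as you claim.
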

We remark that $tmp_{i}$ is dynamically changing for $i\in[1, Q]$, but it does not effect the indistinguishability result because $PRF_{r}(i\|role)$ is pseudorandom, so $PRF_{r}(i\|role)\% N$ is also pseudorandom.

\section{Probability Computation Algorithms}

\autoref{alg:helpers} presents baseline implementations of the helper functions needed to apply \autoref{alg:probcomphyper}. They respectively compute the logarithm of the binomial coefficient (\textit{LogBinCoef}), the probability mass function of the hypergeometric distribution (\textit{HypergeometricPMF}), the log of the Gaussian/ordinary hypergeometric function (\textit{Hyp2F1}), a suitable base interval for golden-section search with RateFuncHelper as defined in \autoref{alg:probcomphyper} (\textit{RateFuncSearchRange}), and the new interval after a step of golden-section search (or rather a variation that uses two simpler mid points, for presentational simplicity). \textit{HypergeometricPMF} uses the logarithm of the binomial coefficient to avoid floating-point errors during computation. The function LogHyp2F1 computes the logarithm of $_2F_1(a,b,c,z)$, which is defined as
$
_2F_1(a,b,c,z) = \sum_{n=0}^{\infty} \frac{(a)_n (b)_n}{(c)_n} \frac{z^n}{n!}
$
where 
$
(x)_n =  \prod_{i=0}^{n-1} (x+i) $
for $n \in \mathbb{N}$ and $z \in \mathbb{R}$. It similarly uses logarithms when possible in order to avoid numerical errors.
\begin{algorithm}[t!]
\caption{Helper functions.}
\label{alg:helpers}
\footnotesize
\SetKwProg{func}{function}{}{}
\func{LogBinCoef(n,k)}{
    $x \leftarrow 0$;
	$m \gets \min(k, n-k)$;
	
	\For{$i \in \{0,\ldots,m-1\}$} {
		$x \leftarrow x + \log(n-i) - \log(m-i)$;
	}
	\Return x;
}

\func{HypergeometricPMF($S$,$M$,$Q$,$i$)}{
	$x \leftarrow \textit{LogBinCoef}(M,i) - \textit{LogBinCoef}(S-M,Q-i)$ \\ \hskip0.6cm $-$ $\textit{LogBinCoef}(S,Q)$;

	\Return $e^x$;
}

\func{LogHyp2F1($a$,$b$,$c$,$d$)}{
	$x \leftarrow 1$;
    $y \leftarrow \infty$;
    $n \leftarrow 1$;
    $r \leftarrow 0$;

    // $\epsilon$ is an accuracy threshold set by the user

    \While{$e^{y + r} < \epsilon$}{
    	$y \leftarrow 0$;
	
		\For{$i \in \{1,\ldots,n\}$}{
			$y \leftarrow y + \log(-a-i+1)-\log(-c-i+1)$
			$+\log(-b-i+1)-\log(i)+\log(z)$;
		}
		\If{$y > -\infty$}{
			$k \leftarrow \lfloor y \rfloor$;
			$r \leftarrow r + k;$
			$x \leftarrow x e^{-k} + e^{y - k}$;
		}
		$n \leftarrow n + 1$;
    }

    \Return $\log(x) + r$;
}

\func{RateFuncSearchRange($n$,$u$,$q$,$t$)}{
	$\vec{\lambda} \leftarrow (0,1,2,3)$;
	
	\For{$i \in \{1,\ldots,4\}$}{
		$y_i \gets \textit{RateFuncHelper}(n,u,q,\lambda_i,t)$;
	}
	
	\While{$y_3 > y_2$ and $y_2 > y_1$ and $y_1 > y_0$} {
	
		\For{$i \in \{1,\ldots,4\}$}{
			$\lambda_i \leftarrow 2 * \lambda_i$;
			$y_i \gets \textit{RateFuncHelper}(n,u,q,\lambda_i,t)$;
		}
	}
	\Return $y_3$;
}

\func{ReduceInterval($\vec{\lambda}$,$\vec{y}$,$n$,$u$,$q$,$t$)}{
	
	\If{$y_1 > y_2$}{
		$\lambda_3 \leftarrow \lambda_2$;
		$\lambda_2 \leftarrow 2/3 * \lambda_3 + 1/3 * \lambda_0$;
		$\lambda_1 \leftarrow 1/3 * \lambda_3 + 2/3 * \lambda_0$;
	}
	\Else{
		$\lambda_0 \leftarrow \lambda_1$;
		$\lambda_1 \leftarrow 2/3 * \lambda_0 + 1/3 * \lambda_3$;
		$\lambda_2 \leftarrow 1/3 * \lambda_0 + 2/3 * \lambda_3$;
	}
	
	\Return $\vec{\lambda}$;
}

\end{algorithm}

\section{Safety Analysis Lemmas}
The following lemma gives an expression for the probability mass function of sums of hypergeometric random variables.
\begin{lemma}
Let $X_1,\ldots,X_k$ be independent and identically distributed, such that each $X_i$ takes values on $\{0,1,\ldots,q\}$. Let $T_k = \sum_{i=1}^k X_i$. Then
$$
\P(T_k = t) = \sum_{x_2=0}^q \sum_{x_3=0}^q \ldots \sum_{x_{k}=0}^q \prod_{i=2}^{k} \P(X_i = x_i) \P\left(X_{1} = t - \sum_{i=2}^{k} x_i\right).
$$
\label{lm:convolutions}
\end{lemma}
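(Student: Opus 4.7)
The plan is to prove this by a direct application of the law of total probability, combined with the mutual independence of $X_1, \ldots, X_k$. No induction is really needed; the identity is essentially the repeated convolution formula for independent discrete random variables, written out all at once.

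First I would partition the event $\{T_k = t\}$ according to the realized values of $X_2, \ldots, X_k$. Since each $X_i$ takes values in the finite set $\{0, 1, \ldots, q\}$, the event can be written as a disjoint union
$$
\{T_k = t\} = \bigsqcup_{(x_2,\ldots,x_k) \in \{0,\ldots,q\}^{k-1}} \Bigl(\{X_2 = x_2\} \cap \cdots \cap \{X_k = x_k\} \cap \bigl\{X_1 = t - \textstyle\sum_{i=2}^k x_i\bigr\}\Bigr).
$$
Countable additivity then turns $\P(T_k = t)$ into the sum over $(x_2, \ldots, x_k)$ of joint probabilities of the form $\P(X_1 = t - \sum_{i=2}^k x_i,\, X_2 = x_2, \ldots, X_k = x_k)$.

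Second, I would invoke the mutual independence of $X_1, \ldots, X_k$ to factor each joint probability as a product of marginals, which produces exactly the expression in the lemma statement. Since the $X_i$ are identically distributed, the marginal $\P(X_i = x_i)$ can be replaced by a common symbol, but keeping them separate is fine as the lemma is stated. The only minor subtlety is that for some tuples the quantity $t - \sum_{i=2}^k x_i$ may lie outside $\{0,\ldots,q\}$; in those cases $\P(X_1 = t - \sum_{i=2}^k x_i) = 0$, so these tuples contribute nothing and the sum over the full hypercube $\{0,\ldots,q\}^{k-1}$ is valid as stated.

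The main obstacle, such as it is, is purely bookkeeping: verifying that the partition of $\{T_k = t\}$ is genuinely disjoint (which follows from the $(x_2,\ldots,x_k)$ being distinct across the union) and that the independence assumption is applied to the full $(k)$-tuple rather than only to pairs. These are standard but worth stating explicitly to justify the clean factorization.
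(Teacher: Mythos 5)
Your proof is correct, but it takes a genuinely different route from the paper. The paper proves the identity by induction on $k$: the base case is the standard two-variable convolution formula $\P(T_2 = t) = \sum_{x_2} \P(X_2 = x_2)\P(X_1 = t - x_2)$, and the induction step writes $\P(T_k = t) = \sum_{x_k} \P(X_k = x_k) \P(T_{k-1} = t - x_k)$, substitutes the induction hypothesis, and interchanges the finite summations. You instead partition the event $\{T_k = t\}$ in one shot over all tuples $(x_2,\ldots,x_k) \in \{0,\ldots,q\}^{k-1}$ and factor the resulting joint probabilities using the mutual independence of the full $k$-tuple. Both arguments are sound. Your version is more direct and makes explicit two points the paper leaves implicit -- that tuples with $t - \sum_{i=2}^k x_i \notin \{0,\ldots,q\}$ contribute zero, and that \emph{mutual} (not merely pairwise) independence is what licenses the factorization. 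The paper's inductive version has the modest advantage that it only ever invokes pairwise independence of $X_k$ and $T_{k-1}$ at each step (which follows from mutual independence anyway) and mirrors the recursive structure of the \textit{HyperGeomProb} implementation in the commitment algorithm, which is presumably why the authors chose it. Either proof would serve.
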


\begin{proof} We prove this using mathematical induction. For the base case, we have that
$$
\P(T_2 = t) = \sum_{x_2=0}^q \P(X_2 = x_2) \P(X_1 = t - x_2)
$$
from the definition of the convolution of two random variables. 
\footnote{\url{https://en.wikipedia.org/wiki/Convolution_of_probability_distributions}}
For the induction step we have that
\begin{equation*}
\begin{split}
\P(T_k = t) = & \sum_{x_k=0}^q \P(X_k = x_k) \P(T_{k-1} = t - x_k) \\
= & \sum_{x_k=0}^q \P(X_k = x_k) \sum_{x_2=0}^q \sum_{x_3=0}^q \ldots \sum_{x_{k-1}=0}^q \\
& \prod_{i=2}^{k-1} \P(X_i = x_i)   \P\left(X_{1} = t - x_k - \sum_{i=2}^{k-1} x_i\right) \\
= & \sum_{x_2=0}^q \sum_{x_3=0}^q \ldots \sum_{x_{k}=0}^q \prod_{i=2}^{k} \P(X_i = x_i) \P\left(X_{1} = t - \sum_{i=2}^{k} x_i\right),
\end{split}
\end{equation*}
 where the first equality follows the definition of the convolution of two random variables, the second equality is the induction step, and the third equality holds because we are allowed to interchange the summations (as they are all over a finite set of elements). This proves the lemma.
 \end{proof}
The following lemma about the convexity of $c_X$ is needed at two different points in the text (namely for the validity of golden-section search, and for Lemma~\ref{lm:meansmalllft}). It is a well-known result -- 
see, e.g., \cite{boucheron2013concentration} or Cosma Shalizi's lecture notes on stochastic processes (week 31), which can be found \href{https://www.stat.cmu.edu/~cshalizi/754/2006/}{online} -- but it is included here for completeness.

\begin{lemma}
Let $X$ be a random variable. Its cumulant-generating function $c_X(\lambda) = \log(\E(e^{\lambda X}))$ is convex.
\label{lm:convexcgf}
\end{lemma}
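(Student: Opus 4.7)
The plan is to prove convexity directly from the definition via H\"older's inequality, which has the advantage of requiring no differentiability or moment-existence assumptions beyond the finiteness of $\E(e^{\lambda X})$ at the points of interest. Recall that convexity of $c_X$ means that for any $\lambda_1, \lambda_2 \in \mathbb{R}$ and any $\theta \in [0,1]$,
$$
c_X(\theta \lambda_1 + (1-\theta) \lambda_2) \leq \theta\, c_X(\lambda_1) + (1-\theta)\, c_X(\lambda_2).
$$

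First I would rewrite the left-hand side in terms of the moment-generating function: with $M(\lambda) = \E(e^{\lambda X})$, the inequality above becomes $\log M(\theta \lambda_1 + (1-\theta)\lambda_2) \leq \theta \log M(\lambda_1) + (1-\theta) \log M(\lambda_2)$, or equivalently
$$
\E\bigl(e^{(\theta \lambda_1 + (1-\theta)\lambda_2) X}\bigr) \leq \bigl(\E(e^{\lambda_1 X})\bigr)^{\theta} \bigl(\E(e^{\lambda_2 X})\bigr)^{1-\theta}.
$$
Next I would factor the exponent as $e^{(\theta \lambda_1 + (1-\theta)\lambda_2) X} = (e^{\lambda_1 X})^{\theta} \cdot (e^{\lambda_2 X})^{1-\theta}$ and apply H\"older's inequality with conjugate exponents $p = 1/\theta$ and $q = 1/(1-\theta)$ (which satisfy $1/p + 1/q = 1$) to the two nonnegative factors. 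This yields exactly the required bound, and taking logarithms on both sides completes the argument.

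As a remark, I would note the alternative (and equally standard) route: if one is willing to assume that $M(\lambda)$ is finite on an open interval so that differentiation under the integral is justified, then a short computation shows that $c_X''(\lambda) = \E(X^2 e^{\lambda X})/M(\lambda) - \bigl(\E(X e^{\lambda X})/M(\lambda)\bigr)^2$, which is the variance of $X$ under the exponentially tilted probability measure with density $e^{\lambda X}/M(\lambda)$ and therefore nonnegative. The main (very minor) obstacle is simply being careful about the edge cases $\theta \in \{0,1\}$ where H\"older degenerates, and about ensuring that the inequality is stated for those $\lambda$ at which $c_X$ is finite (which is all that is needed in the application, where $\lambda \geq 0$ and $X$ is bounded).
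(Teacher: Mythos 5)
Your proof is correct and follows essentially the same route as the paper: both establish convexity of $c_X$ by factoring $e^{(\theta\lambda_1+(1-\theta)\lambda_2)X}$ as $(e^{\lambda_1 X})^{\theta}(e^{\lambda_2 X})^{1-\theta}$ and applying H\"older's inequality with conjugate exponents $1/\theta$ and $1/(1-\theta)$. Your write-up is in fact slightly more careful about the exact form of the H\"older bound and the degenerate cases $\theta\in\{0,1\}$, but the underlying argument is identical.
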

\begin{proof}
The lemma follows from a simple application of H\"older's inequality.\footnote{\href{https://en.wikipedia.org/wiki/H\%C3\%B6lder\%27s\_inequality}{https://en.wikipedia.org/wiki/H\"older's\_inequality}} 
\begin{equation*}
\begin{split}
c_X(a \lambda + b \mu)  & = \log\E\left( e^{(a \lambda + b \mu) X } \right) \\
& = \log\E\left[ \left(e^{\lambda X }\right)^a \left(e^{\mu X }\right)^b\right] 
\\ 
     & \leq \log\left(\E\left[ \left(e^{\lambda X }\right)^a \right] \cdot \E\left[ \left(e^{\mu X }\right)^b\right]\right) \\
 & = a c_X(\lambda) + b c_X(\mu)
\end{split}
\end{equation*}
\end{proof}

The following lemma asserts that when the per-round supporting stake does not exceed the expected value of our distribution (which in our case is given by $q u / n$), then we cannot establish a meaningful bound using the Cram\'er-Chernoff method. However, if it \emph{does}, then the rate function is positive, which means that the bound (and therefore the $p$-value) goes to zero as the number of rounds increases.

\begin{lemma}
Let $X$ be a random variable. Then its large-deviations rate function $r_X(t)$, defined as
$
r_{X}(t) = \sup_{\lambda \geq 0} (\phi_t(\lambda)) = \sup_{\lambda \geq 0} (\lambda t - c_{X}(\lambda)),
$
has the following properties:
\begin{inparaenum}[1)]
\item if $t \leq \E(X)$, then $r_{X}(t) = 0$, and \label{it:propone}
\item if $t > \E(X)$, then $r_{X}(t) > 0$. \label{it:proptwo}
\end{inparaenum}
\label{lm:meansmalllft}
\end{lemma}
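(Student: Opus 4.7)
The plan is to work directly with the auxiliary function $\phi_t(\lambda) = \lambda t - c_X(\lambda)$, exploiting three standard properties of the cumulant-generating function: (i) $c_X(0) = \log(\E(1)) = 0$, (ii) $c_X'(0) = \E(X)$ (obtained by differentiating $c_X$ at $0$, which is justified for bounded $X$ such as the hypergeometric variable in our application), and (iii) $c_X$ is convex, which is exactly the content of Lemma~\ref{lm:convexcgf}. From (i) we immediately get $\phi_t(0) = 0$, so $r_X(t) \geq 0$ for every $t$; from (ii) we get $\phi_t'(0) = t - \E(X)$.

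For property~\ref{it:propone}, suppose $t \leq \E(X)$. By convexity of $c_X$ (property (iii)), its derivative $c_X'$ is non-decreasing on $[0,\infty)$, so $c_X'(\lambda) \geq c_X'(0) = \E(X) \geq t$ for all $\lambda \geq 0$. Hence $\phi_t'(\lambda) = t - c_X'(\lambda) \leq 0$ throughout $[0,\infty)$, which makes $\phi_t$ non-increasing on that half-line. The supremum is therefore attained at $\lambda = 0$ and equals $\phi_t(0) = 0$, giving $r_X(t) = 0$.

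For property~\ref{it:proptwo}, suppose $t > \E(X)$. Then $\phi_t'(0) = t - \E(X) > 0$, so $\phi_t$ is strictly increasing in a neighbourhood of $0$. In particular, there exists some $\lambda_0 > 0$ such that $\phi_t(\lambda_0) > \phi_t(0) = 0$. Since $r_X(t) = \sup_{\lambda \geq 0} \phi_t(\lambda) \geq \phi_t(\lambda_0)$, the conclusion $r_X(t) > 0$ follows.

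The main obstacle is justifying the differentiation identity $c_X'(0) = \E(X)$, i.e., being able to interchange differentiation with expectation at $\lambda = 0$. For the hypergeometric random variables that $r_X$ is applied to in Section~\ref{sec:analysis:safety}, $X$ is bounded (taking values in $\{0,1,\dots,q\}$), so the moment generating function is finite and smooth on all of $\R$ and dominated convergence applies without issue; if a more general statement were desired, one would add a standard hypothesis that the MGF is finite on a neighbourhood of the origin.
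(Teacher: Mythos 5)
Your proof is correct and follows essentially the same route as the paper's: both establish $r_X(t)\geq 0$ from $c_X(0)=0$, compute $\phi_t'(0)=t-\E(X)$ via the moment-generating-function identity, and then use the convexity of $c_X$ (Lemma~\ref{lm:convexcgf}) to settle the two cases according to the sign of $\phi_t'(0)$. Your added remark on justifying $c_X'(0)=\E(X)$ for bounded $X$ is a welcome bit of extra care but does not change the argument.
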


\begin{proof}
By definition, $c_{X}(0) = \log (1) = 0$. Hence, \mbox{$\phi_t(0) = 0$} for all $t$, so by the definition of the supremum it must hold that $r_{X}(t) = \sup_{\lambda \geq 0} (\phi_t(\lambda)) \geq 0$. In the following, we will investigate $\phi_t'(0)$, i.e., the derivative of $\phi_{t}(\lambda)$ at $\lambda = 0$. If $\phi'_{t} (0) \leq 0$, then by the convexity of $c_{X}$ we know that $\phi'_{t}(\lambda)\leq 0$ for all $\lambda \geq 0$, which proves property~\ref{it:propone}. However, if $\phi'_{t}(0) > 0$ then there must exist a point $\lambda^*>0$ such that $\phi_t(\lambda^*) > \phi_t(0) = 0$, and the supremum must at least be equal to this value.
Hence, we can demonstrate both properties of the lemma using $\phi'_{t} (0)$.

We find that
$$
\phi'_{t} (\lambda) = \frac{d}{d\lambda} \left(\lambda t - \log(\E(e^{\lambda X})) \right) = t - \frac{\frac{d}{d\lambda} \E(e^{\lambda X})}{\E(e^{\lambda X})}
$$
because of the chain rule. Hence,
$$
\phi'_{t} (0) = \left. t - \frac{\frac{d}{d\lambda} \E(e^{\lambda X})}{\E(e^{\lambda X})} \right\vert_{\lambda = 0} = t - \E(X)
$$
because $\E(e^{0}) = 1$, and $\left. \frac{d}{d \lambda} \E(e^{\lambda X}) \right\vert_{\lambda = 0} = \E(X)$ because of the fundamental property of the moment-generating function\footnote{\url{https://en.wikipedia.org/wiki/Moment-generating\_function}} that $\left. \frac{d^n}{d^n \lambda} \E(e^{\lambda X}) \right\vert_{\lambda = 0} = \E(X^n)$.

Obviously, if $\phi'_{t} (\lambda) = t - \E(X)$ then $\phi'_t(0) \leq 0$ if $t \leq \E(X)$ and $\phi'_t(0) > 0$ otherwise, which proves the lemma.
\end{proof}

\section{Liveness Analysis Lemmas}
\label{app:liveness}

\begin{lemma}
    \label{lemma:strongchain}
    The main chain after $t$ rounds includes at least $t(1-\alpha)q$
    stake units while $t$ increases.
\end{lemma}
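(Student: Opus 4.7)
The plan is to decompose the argument into three parts: (i) show that in each synchronous round after GST, in expectation $(1-\alpha)q$ honest voter stake units vote for the same main-chain tip; (ii) show via concentration that over $t$ synchronous rounds the aggregate honest vote count concentrates around $t(1-\alpha)q$ as $t$ grows; and (iii) show that these honest votes are ultimately counted toward the main chain by the chain-selection rule, either through direct inclusion by an honest leader or, when the leader is adversarial or offline, through virtual blocks aggregated by a later honest leader.

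First, I would argue that after GST, because the network is synchronous and honest nodes faithfully execute \textit{MainChain} on a common view, in each round every honest voter broadcasts a vote for the same last block of the main chain; any transient discrepancy in views stabilizes within one round since $\Delta$ bounds message delivery. Since the cryptographic sampling of \autoref{alg:election} selects each of the $q$ voter stake units uniformly without replacement from the stake pool, the number $H_i$ of honest voter stake units in round $i$ is hypergeometric with mean $(1-\alpha)q$. The cumulative honest stake sampled as voters over $t$ rounds, $\sum_{i=1}^{t} H_i$, therefore has mean $t(1-\alpha)q$, and a standard Hoeffding-type bound for sums of bounded random variables shows that, for any $\epsilon>0$, $\P\bigl(\sum_{i=1}^{t} H_i < (1-\epsilon)\,t(1-\alpha)q\bigr)$ decays exponentially in $t$.

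Next, I would show that these honest votes are counted on the main chain. Because the most-stake/GHOST rule in \autoref{alg:selection} accumulates votes across the entire subtree, and because the virtual-block mechanism of \autoref{sec:details:forks} preserves any vote that an adversarial leader attempts to censor (a subsequent honest leader incorporates omitted votes via virtual blocks that are transmitted with her standard block), every honest vote cast in round $i \le t$ contributes to the supporting subtree of the main-chain tip chosen at round $t$. Hence the stake on the main chain after $t$ synchronous rounds is at least $\sum_{i=1}^{t} H_i$, which by the concentration argument above is at least $(1-\epsilon)\,t(1-\alpha)q$ with probability approaching $1$ as $t\to\infty$, yielding the stated $t(1-\alpha)q$ bound asymptotically.

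The hard part will be step (iii): rigorously showing that essentially no honest vote is ``lost'' from the main chain. Two subtleties arise. First, honest votes cast in the first few rounds after GST may initially support a chain that loses the most-stake competition, so one must argue that once the network stabilizes the honest super-majority quickly dominates any adversarial subtree (formally, the per-round drift is at least $(1-2\alpha)q > 0$, which can be made quantitative via Lemma~\ref{lm:meansmalllft}), so only an $O(1)$-sized prefix of votes can be permanently left off the eventual main chain and is absorbed by the $\Theta(t)$ main term. Second, one must ensure that an adversarial leader cannot both have her block lie on the main chain and permanently exclude a received honest vote; the virtual-block rule guarantees that a censored vote either appears through an aggregating virtual block on the same prefix or drives the subtree away from the censoring block, so by a case analysis on each round the censored vote still contributes to the stake of the main chain's subtree.
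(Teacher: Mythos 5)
Your decomposition (i)--(ii) matches the paper's proof, which likewise starts from the per-round expected honest elected stake $(1-\alpha)q$ and the observation that under synchrony all honest votes for the common tip are delivered on time (the paper in fact stays at the level of expectations and never invokes a Hoeffding-type concentration bound, leaning on the ``while $t$ increases'' phrasing; your concentration step is a welcome tightening, though it only yields $(1-\epsilon)t(1-\alpha)q$ with high probability rather than the stated constant). Where you genuinely diverge is step (iii), and here the paper's route is both simpler and avoids the difficulty you correctly flag as the hard part. You try to show that \emph{no honest vote is ever lost} from the final main chain's supporting subtree, which forces you into the awkward business of bounding the prefix of votes stranded on losing branches and arguing that censored votes are always recovered via virtual blocks. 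The paper instead closes with a displacement argument: honest voting alone endows the honest-supported chain with $t(1-\alpha)q$ stake, and by the most-stake rule of \autoref{alg:selection} any chain that subsequently displaces it as the main chain must itself carry at least as much supporting stake, so the bound transfers to whatever the final main chain turns out to be. This one-line observation makes your entire case analysis on censoring leaders and your $O(1)$-prefix absorption argument unnecessary, and it is also more robust: your claim that ``every honest vote cast in round $i \le t$ contributes to the supporting subtree of the main-chain tip chosen at round $t$'' is strictly stronger than what is needed and is not actually true when honest nodes switch branches, since votes cast past the divergence point count only toward the common prefix under GHOST. I would recommend replacing your step (iii) with the displacement argument; your steps (i)--(ii) can stand as a more quantitative version of the paper's opening.
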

\begin{proof}
    While $t$ increases, the expected number of ``honest'' stake units elected
    per round is $(1-\alpha)q$.  When the network is synchronous, honest nodes
    vote for the same blocks and their votes are delivered on time.  Thus, with
    the failing (i.e., not voting) adversary, the strongest chain will obtain
    $t(1-\alpha)q$ of the supporting stake.  An adversary can cause honest nodes to
    change their current main chain, e.g., by showing a stronger fork, but
    according to our chain selection rule (\autoref{sec:details:forks}) such a
    chain would need to contain at least the same amount of supporting stake as
    their main chain.  Thus, the final main chain would include at least
    $t(1-\alpha)q$ stake units.
\end{proof}

\begin{lemma}
    \label{lemma:nchain}
    \label{lemma:ngrowth}
    With increasing $t$, the main chain after $t$ rounds includes $m$ blocks
    that the honest participants voted for, where
        $$t\geq m\geq \lceil t(1-2\alpha)\rceil.$$
\end{lemma}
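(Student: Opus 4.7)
The plan is to prove the two bounds on $m$ separately. The upper bound $m\leq t$ is immediate: at most one standard block joins the main chain per round, so after $t$ rounds there are at most $t$ main-chain blocks that any participant could conceivably have voted for.

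For the lower bound, I would count the rounds in which the cryptographic sampler of \autoref{alg:election} elects an honest leader. Because sampling is proportional to stake (and $l=1$), the indicator $Y_i$ that round $i$'s leader is honest is Bernoulli$(1-\alpha)$, and these indicators are approximately independent across rounds under the assumption from \autoref{sec:details:beacon} that the beacon resists short-term adversarial bias. Hoeffding's inequality then yields $\P\bigl(\sum_{i=1}^t Y_i \leq (1-2\alpha)t\bigr)\leq \exp(-2\alpha^2 t)$, which tends to $0$ as $t$ grows, so with overwhelming probability there are at least $\lceil(1-2\alpha)t\rceil$ honest-leader rounds.

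The core step is to argue that each honest-leader round $i$ after GST contributes exactly one block $B_i$ that (i) persists on the final main chain after $t$ rounds and (ii) is voted for by honest participants. Under synchrony every honest node's view agrees within $\Delta$, so the honest leader of round $i$ extends the common tip; in round $i+1$ the honest voters, who hold at least $(1-\alpha)q$ stake in expectation, all cast votes for $B_i$. By \autoref{lemma:strongchain} the subtree containing $B_i$ accrues at least $(1-\alpha)q$ supporting stake per round while any adversarial alternative accrues at most $\alpha q$; since $\alpha<\tfrac{1}{3}$, the per-round gap $(1-2\alpha)q$ dominates the $O(\sqrt{t-i})$ sampling fluctuations, so the GHOST rule of \autoref{alg:selection} keeps $B_i$ on the main chain. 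Combining with the Hoeffding bound gives $m\geq \sum_{i=1}^t Y_i \geq \lceil(1-2\alpha)t\rceil$ with probability tending to one.

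The main obstacle I expect is the persistence step: showing $B_i$ survives every subsequent adversarial manoeuvre, including vote equivocation and fork attempts targeted at a specific honest block. The resolution is that during synchrony every honest vote in any later round necessarily supports the unique honest main-chain view containing $B_i$, while any adversarial vote counts at most once on a given fork, so the per-round honest surplus of $(1-2\alpha)q$ is strictly positive and linear in $t-i$, swamping sampling noise. A minor subtlety is that the $O(1)$ honest-leader blocks created in the very last rounds may not yet have accumulated enough support to be deemed permanent; however, this $O(1)$ shortfall is negligible compared to the $\alpha t$ slack between $(1-\alpha)t$ and $(1-2\alpha)t$ as $t$ grows, so the asymptotic bound is unaffected.
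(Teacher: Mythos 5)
Your upper bound matches the paper's. For the lower bound, however, you are proving a different statement from the one the lemma makes, and the step on which your version hinges is not established. The lemma counts main-chain blocks \emph{that honest participants voted for}; you instead count rounds with an \emph{honest leader}. These are different quantities -- a block proposed by an adversarial leader can still be (and under synchrony will be) voted for by honest nodes, and the liveness proof that consumes this lemma relies on exactly that distinction: it takes the $m'$ honest-voted blocks from this lemma and \emph{then} separately bounds the probability $\alpha^{m'}$ that the adversary led all of those rounds. If the lemma already guaranteed honest leadership, that step would be vacuous. The fact that your Hoeffding slack lands on the threshold $(1-2\alpha)t$ is a numerical coincidence: in your argument it is the expected count $(1-\alpha)t$ minus a concentration margin $\alpha t$, whereas in the paper it arises from a deterministic pigeonhole. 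The paper's proof runs: by Lemma~\ref{lemma:strongchain} the winning chain must carry at least $t(1-\alpha)q$ supporting stake; the adversary can contribute at most $t\alpha q$ of it; hence at least $s \geq t(1-\alpha)q - t\alpha q = t(1-2\alpha)q$ stake must come from honest votes, and since each block holds at most $q$ stake units these honest votes must be spread over at least $\lceil t(1-2\alpha)\rceil$ blocks. Crucially, this accounting applies to \emph{whatever} chain ends up as the main chain, including an adversarial fork $C'$ that displaces the honest view -- the shared honest-voted prefix is forced.

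Your route, by contrast, must show that every honest-leader block \emph{persists} on the final main chain, i.e., that no adversarial fork ever displaces it. That is precisely the hard part, and you assert it (``the GHOST rule keeps $B_i$ on the main chain'') rather than prove it: you would need to control the sampling fluctuations of the honest support uniformly over all fork points and all later rounds, and your own caveat about recently created blocks shows the claim is false without qualification. The paper's pigeonhole sidesteps persistence entirely. Note also that your conclusion is only a with-high-probability statement, with a rate $e^{-2\alpha^2 t}$ that degrades as $\alpha \to 0$, whereas the paper's bound is deterministic given Lemma~\ref{lemma:strongchain}.
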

\begin{proof}
    Following Lemma~\ref{lemma:strongchain}, honest nodes after $t$ would agree
    on the main chain $C$ that has at least $t(1-\alpha)q$ stake units.  Let us
    assume that after $t$ rounds there exists an adversarial chain $C'$ with
    total stake of $t\alpha q+s$, where $t \alpha q$ is adversarial stake and $s$ is
    the stake shared between $C$ and $C'$ (i.e., the chains have common
    blocks/prefix).  $C'$ can overwrite $C$ only if: $t\alpha q+s\geq t(1-\alpha)q,$
    giving the bound for the shared stake:  
    $s\geq t(1-\alpha)q-t\alpha q.$
    As every block contains maximally $q$ stake units, $s$ can be
    contributed in minimum $\lceil\frac{t(1-\alpha)q-t\alpha q}{q}\rceil$ blocks
    shared between $C$ and $C'$.  Thus, 
    $n\geq \lceil t(1-2\alpha)\rceil.$

    Note, that $n$ is upper bound by $t\geq n$, as the main chain cannot contain
    more than $t$ blocks in $t$ rounds.
\end{proof}

\section{A Naive BFT Algorithm with Random Committees}
\label{sec:committee_bft}

In this section we show that BFT cannot be straightforwardly generalized to a setting with randomly sampled committees. We illustrate this using the worst-case adversarial setting of \autoref{sec:analysis:safety}, in which the honest nodes are split evenly across a network partition: i.e., a malicious leader proposes two conflicting blocks, and $n+1$ nodes are aware of one block and $n+1$ of the other. In the normal setting of BFT protocols (i.e., without committees), the safety property holds: even if the remaining $n-1$ adversarial nodes equivocate by voting for both blocks, then both blocks receive only $2f$ votes, so neither is committed and no safety fault occurs. However, when only a randomly drawn committee is allowed to vote, the probability that the committee contains enough malicious voters to commit conflicting blocks can be very high. 

We consider the following setting: again, an adversarial user proposes two conflicting blocks $A$ and $B$, such that $n_A$ honest users vote for block $A$ and $n_B$ honest users vote for block $B$. The remaining $f-1$ malicious users vote for both blocks. Let $n_A = n_B =  f+1$, so that \mbox{$n = f + n_A + n_B = 3f + 1$}, which means that the number of honest nodes is sufficient for Byzantine fault tolerance. A committee of size $q$ is drawn that includes $X_{f}$ malicious users, $X_{A}$ honest users who vote for block $A$, and $X_B$ users vote for block $B$, such that $X_f$, $X_A$, and $X_B$ are random variables. Due to the analogy with drawing balls from an urn, the joint probability distribution of $X_f$, $X_A$, and $X_B$ can be seen to be the multivariate hypergeometric distribution, i.e.,
$$
\P(X_f = x_f, X_A = x_A, X_B = x_B) = \frac{\comb{X_f}{x_f}\comb{X_A}{x_A}\comb{X_B}{x_B}}{\comb{n}{q}}.
$$ 
A straightforward generalization of PBFT would commit a block if it receives $T = \lceil \frac{2fq}{n} \rceil + 1$ instead of $2f + 1$ votes over two rounds. If the network is in a synchronous period, so that all votes that are cast are also received by all other nodes within the round, then a safety fault occurs if \mbox{$X_A + X_M \geq T$} and \mbox{$X_B + X_M \geq T$}. This probability can be calculated directly in \autoref{alg:naivebft}. The results of this algorithm for different choices of $f$ and $q$ are given in \autoref{tab:naivebft}. We observe that the probability of a safety fault grows as the number of nodes increases, and that this probability is greater than $10\%$ per round when $f \geq 1000$ (and hence $n \geq 3001$) even if $90\%$ of the nodes are sampled for the committee.

\begin{algorithm}[!t]
\caption{Naive Committee BFT}
\label{alg:naivebft}
\footnotesize
\SetKwProg{func}{function}{}{}
\func{CalcNaiveBFTFaultProb(f,q)}{
	$n_A \gets f+1$;
	$n_B \gets f+1$;
	$n \gets f + n_A + n_B$;
	$T \gets \lceil 2 f q /n\rceil$
	$p \gets 0$
	
	\For{$x_f \in \{0,\ldots,q\}$} {
		\For{$x_A \in \{0,\ldots,q-f\}$} {
		
			$x_B \gets q - m - x_A$
		
			\If{$x_f + x_A \geq T$ \textbf{and} $x_f + n_B \geq T$} {
				\If{$x_f \leq f$ \textbf{and} $x_A \leq n_A$ \textbf{and} $x_B \leq n_B$} {
					$y \gets \textit{LogBinCoef}(f-1,x_f) + \textit{LogBinCoef}(n_A,x_A) + \textit{LogBinCoef}(n_B,x_B) - \textit{LogBinCoef}(n,q)$
					 
					 $p \gets p + e^{y}$
				}
			}
		}
	}
	\Return p;
}

\end{algorithm}

\begin{table}[t!]
    \caption{Probability of a safety fault in the naive BFT procedure of \autoref{sec:committee_bft}.}
\begin{center}
\begin{tabular}{c|ccccc}
& \multicolumn{4}{c}{\raisebox{0.05cm}{$Q/N$}} \\
$M$ & $0.1$ & $0.3$ & $0.5$ & $0.7$ & $0.9$ \\ \toprule
30 & 0.0544 & 0.064 & 0.0548 & 0.0358 & 0.0045 \\ 
100 & 0.0946 & 0.103 & 0.0961 & 0.0796 & 0.0383 \\ 
300 & 0.1219 & 0.1278 & 0.1232 & 0.1118 & 0.0783 \\ 
1000 & 0.141 & 0.1447 & 0.142 & 0.1351 & 0.1134 \\ 
3000 & 0.1516 & 0.1537 & 0.1521 & 0.148 & 0.1345 \\ 

\end{tabular}
\end{center}
\label{tab:naivebft}
\end{table}

\section{Technical Details of Algorand}
\label{sec:algorand technical}
In each round of Algorand, each node uses a VRF with the user's private key to calculate a hash of the random beacon concatenated with a ``role'' index. Each round in Algorand consists of two phases, and there are three roles: block proposer, initial (``step'') committee member, and final committee member. The value of the calculated hash determines how many of the node's stake units qualify for each role. For any node, let $N$ be the number of stake units that it controls, and $\nrole$ be the number of stake units that are assigned to a given role. Alg.~1 of \cite{gilad2017algorand} then ensures that $\nrole$ is binomially distributed with
$\displaystyle 
\P(\nrole = n) = \comb{N}{n} {\prole}^{n} (1-\prole)^{N - n},
$ 
where $\prole$ is the probability for each single stake unit that it is assigned to \textsc{role}. The probabilities are chosen such that the expected total numbers of block proposers, initial committee members, and final committee members per round equal $\tau_{\textsc{proposer}}$, $\tau_{\textsc{step}}$, and $\tau_{\textsc{final}}$, respectively. 

Each leader proposes a block such that there is a priority relation between the proposed blocks that is obvious to all nodes. The committee members wait a fixed amount of time per round ($\lambda_{\textsc{priority}}$) to receive blocks. Next, they initiate Phase 1 in which they vote for the highest-priority block that they are aware of (Alg.~7 of \cite{gilad2017algorand}). If a block receives more than $T_{\textsc{step}} \cdot \tau_{\textsc{step}}$ votes across two rounds of voting (``steps''), it advances to Phase 2. Phase 2 (Alg.~8 of \cite{gilad2017algorand}) consists of a sequence of steps in which the initial committee members either vote for the block from Phase 1 or an empty block. If either option receives more than $T_{\textsc{step}} \cdot \tau_{\textsc{step}}$ votes during a step, it is subjected to a final vote among the final committee members.
If the block receives more than $T_{\textsc{final}} \cdot \tau_{\textsc{final}}$ votes in the final vote, then it is committed by the nodes. Each step in either phase of BA$\star$ takes $\lambda_{\textsc{step}}$ time units. The Algorand white paper provides a set of benchmark parameters for BA$\star$ (see Fig.~4 of~\cite{gilad2017algorand}) to ensure that the probability of a safety fault is sufficiently small (less than $10^{-7}$ over 1000 rounds) if the adversary controls at most $20\%$ of the stake: \mbox{$\lambda_{\textsc{priority}} = 5$ seconds}, \mbox{$\lambda_{\textsc{step}} = 20$ seconds}, $\tau_{\textsc{proposer}} = 26$, $\tau_{\textsc{step}} = 2000$, $\tau_{\textsc{final}} = 10000$, $T_{\textsc{step}} = 0.685$, and $T_{\textsc{final}} = 0.74$.

\section{Stake Management}
\label{app:disc:stake}
Current PoS systems suffer from the undesired ``rich get richer''
dynamics~\cite{fanti2018compounding}.  In short, nodes with large stake holdings
obtain more stake from rewards (thus, more voting power) and effectively
quickly dominate the system.  Although that issue seems to be inherent to PoS
systems, we propose to divide stake into \textit{voting stake} and
\textit{transaction stake}.  The voting stake would be used only for running the
consensus protocol, while transaction stake would be for conducting standard
cryptocurrency transactions.  All rewards are paid in transaction stake, and not
in voting stake, which however, can be exchanged between nodes freely.  With this
simple modification, the system does not make rich nodes richer automatically,
instead a market for the voting stake is introduced.
The voting stake can be transacted (to allow newcomer nodes), however, we restrict
that the voting right is granted only after the transaction is confirmed by a
significant number of blocks such that its reversal probability is negligible.

A related issue is the voting stake that is not used for a long time period.  As
the system has to be performant and sustainable in the long-term, the traditional
model of failures may be insufficient as honest nodes can simply leave the
system due to other reasons than failures.  The related work
suggests~\cite{buterin2017casper} a solution where the stake that has been
inactive for a long time period may ``expire'' (as it may indicate that its
owner lost her keys or interests in supporting the platform).  We believe that a
similar can be applied successfully with \name.

\section{Election via VRF}
\label{app:vrf}
A verifiable random function (VRF)~\cite{micali1999verifiable} allows for the
generation of a pseudorandom output from a message and a secret key. The output can
 then be publicly verified by any party with the corresponding public key.
Algorand~\cite{gilad2017algorand} proposed a VRF-based method for committee and
leaders election dedicated to PoS systems.  
The core of the procedure is the cryptographic sortition
algorithm (see \S~5 of \cite{gilad2017algorand}).  Every node runs the
procedure locally to find out how many of its stake units were sampled within the
round. The likelihood of being elected is proportional to the stake possession.
This is similar to our cryptographic sampling construction
(\autoref{sec:details:election}), and Algorand's sortition can be applied in
\name almost directly, as we sketch below.
The main voting procedure is similar as previously, i.e., as in
\autoref{alg:vrf}.  The main difference is to use the sortition algorithm to
check whether the node is a leader and voter in the round.  Leaders and voters
add proofs to their blocks and votes, proving that they  were indeed elected
in the given round. These proofs have to be checked by other nodes by running
\textit{VerifyRole}.

\begin{algorithm}[t!]
\caption{The voting procedure with VRF election.}
\label{alg:vrf}
\footnotesize
\SetKwProg{func}{function}{}{}
\func{VotingRound(i)}{ 
    $r \leftarrow RoundBeacon(i)$; 
    $w \gets stake[pk]$\;
    $(h,\pi, s)\leftarrow Sort(sk, r, q, \textit{`vote'}, w, n)$; // Alg. 1
    in \cite{gilad2017algorand}\\ 
    \If(// check if I'm a voter){$s>0$}{
        $B_{-1} \leftarrow MainChain().lastBlk$; // get last block\\
        $\sigma\leftarrow Sign_{sk}(i\|H(B_{-1})\|h\|\pi\|s)$\;
        $v\leftarrow (i,H(B_{-1}), h, \pi, s, pk, \sigma)$; // support vote\\
        Broadcast($v$)\;
    }
    Wait($\Delta$); // meantime collect and verify support votes\\
    $(h,\pi, s)\leftarrow Sort(sk, r, L, \textit{`lead'}, w, n)$; // Alg. 1
    in \cite{gilad2017algorand}\\ 
    \If(// check if I'm a leader){$s>0$}{
        $B_{-1} \leftarrow MainChain().lastBlk$; // possibly different block\\
        $V\leftarrow\{v_a, v_b, v_c, ...\}$; // received $B_{-1}$'s support
     votes\\
        $r_i \leftarrow Random()$\;
        $\sigma\leftarrow Sign_{sk}(i \| r_i\| H(B_{-1}) \|F\| V \| \mathit{Txs}\|h\|\pi\|s)$\;
        $B \leftarrow (i, r_i, H(B_{-1}),F, V, \mathit{Txs}, h, \pi, s, pk, \sigma)$; // new block\\
         Broadcast($B$)\;
     }
     Wait($\Delta$); // wait for the next round
}
\func(// vrfy. leader/voter){VerifyRole(r, role, pk, h, $\pi$, s, size)}{
    $w \gets stake[pk]$\;
    $s'\gets \textit{VerifySort}(pk, h, \pi, r, size, role, w, n)$; // Alg. 2 in \cite{gilad2017algorand}\\
    \Return $s=s'$\;
}
\end{algorithm}

The main advantage of this solution is that voters and leaders are
unpredictable and become known only after publishing their messages.  This
eliminates some classes of adaptive attacks as an adversary cannot
target nodes at the beginning of the round.  
On the other hand, $q$ and $l$ are not constant in this setting and become random variables instead.
This effect may be undesirable as the amount of stake sampled in every round is
unknown, which influences commit decisions.\footnote{It may also decrease the
throughput as with random $l$ there may be no or multiple leaders per round.}
Our analysis can be generalized to this case, however -- the distribution of the supporting stake is no longer hypergeometric but binomial, and since sums of binomially distributed random variables are again binomial, the $p$-values can be computed directly without the need for bounding techniques such as the Cram\'er-Chernoff method. However, the higher variance means that it will take more rounds to commit, so the approach with fixed committee sizes provides faster finalization.
In particular, if $X$ is the number of stake units per round, then the variance
in the hypergeometric distribution of \autoref{sec:analysis} is given by 
$$
\text{Var}_{\textsc{HPG}} = \displaystyle q \frac{u}{n} \frac{n-u}{n} \frac{n-q}{n-1}.
$$ 
By contrast, if we use VRFs then every stake unit is chosen as part of the committee with probability $\frac{q}{n}$ and part of the supporting branch with probability $\frac{u}{n}$. Hence, the number of supporting stake drawn using VRFs in a single round is binomial with sample size $n$ and probability $u/n \cdot q/n$. The corresponding variance is 
$$
\text{Var}_{\textsc{BIN}} = n \frac{uq}{n^2} \left(1 - \frac{uq}{n^2}\right).
$$ 
The difference between these two quantities is illustrated in \autoref{tab:means_and_variances}, which contains for different selections of $n$, $q$, and $u$ the mean of $X$ (which equals $qu/n$ for both distributions), the variances of the hypergeometric distribution ($\text{Var}_{\textsc{HPG}}$) and the binomial distribution ($\text{Var}_{\textsc{BIN}}$), and the ratio of the latter to the former. 
If we keep $u/s$ constant at $\frac{2}{3}$, we see that the variance increases as $n$ or $q$ increases, but that $q$ has the biggest impact. 
We observe that the variance of the binomial distribution is consistently between 3 and 4 times larger than for the hypergeometric distribution.

\begin{table}[t]
\caption{Means and variances of the two different probability distributions (hypergeometric and binomial) for $X$ for different parameters.}
\label{tab:means_and_variances}
\begin{center}
\footnotesize
\begin{tabular}{ccc|c|ccc}
$n$ & $u$ & $q$ & mean & $\text{Var}_{\textsc{HPG}}$ & $\text{Var}_{\textsc{BIN}}$ & ratio\\ \toprule
150 & 100 & 15 & 10.0 & 3.02 & 9.33 & 3.09 \\
150 & 100 & 75 & 50.0 & 8.39 & 33.33 & 3.97 \\
1500 & 1000 & 75 & 50.0 & 15.84 & 48.33 & 3.05 \\
15000 & 10000 & 75 & 50.0 & 16.58 & 49.83 & 3.0 \\
1500 & 1000 & 750 & 500.0 & 83.39 & 333.33 & 4.0 \\
15000 & 10000 & 2000 & 1333.33 & 385.21 & 1214.81 & 3.15 \\
\end{tabular}
\end{center}
\end{table}%

\end{document}